\documentclass[11pt,a4paper]{article}
\usepackage[T1]{fontenc}
\usepackage[utf8]{inputenc}
\usepackage{authblk}
\usepackage{algpseudocode}
\usepackage{url}
\usepackage{amsmath}
\usepackage{algorithm}
\usepackage{amssymb}
\usepackage{amsthm}
\usepackage{mathtools}
\usepackage{caption}
\usepackage{graphicx, subcaption}
\usepackage{algorithm}
\usepackage{enumerate}   
\usepackage{color}
\usepackage{amsfonts}
\usepackage{xspace}
\usepackage{wrapfig}
\usepackage{hyperref}
\usepackage{float}
\hypersetup{
    colorlinks=true,
    linkcolor=blue,
    filecolor=magenta,      
    urlcolor=cyan,
}
\usepackage{geometry}
 \geometry{
 a4paper,
 left=25mm,
 top=20mm,
 right=25mm
 }
\usepackage{bibnames}
\usepackage{cite}
\usepackage{enumitem}

\newtheorem{corollary}{Corollary}[section]

\newtheorem{obs}{Observation}[section]

\newtheorem{proposition}{Proposition}[section]

\definecolor{darkred}{rgb}{1, 0.1, 0.3}
\definecolor{darkgreen}{rgb}{0.15, 0.65, 0.15}
\definecolor{darkblue}{rgb}{0.1, 0.1, 1}
\definecolor{yellow}{rgb}{0.7, 0.7, 0}

\newtheorem{theorem}{Theorem}[section]

\newtheorem{definition}{Definition}[section]
\newcounter{claimcounter}
\numberwithin{claimcounter}{cl}

\newcommand{\permodule}     {{\mathbb{H}}}
\newcommand{\weightf}      {{w}} %{{\mathcal{W}}}

\newcommand{\csize}{\min\{ \arbor(G)m, \sum_{(u,v)\in E} (\din(u) + \dout(v) )\}}
\newcommand{\denselist}{\itemsep 0pt\parsep=1pt\partopsep 0pt}
\newcommand{\HH}    {{\mathsf H}}
\newcommand{\candidateC}    {{\mathsf{C}}} %{{\Xi}}
\newcommand{\mytriple}  {{triple-list\xspace}}
\newcommand{\myquad}    {{quadruple-list\xspace}}
\newcommand{\atriple}    {{\xi}}
\newcommand{\oldG}          {{G^{(s-1)}}}
\newcommand{\newG}          {{G^{(s)}}}

\newcommand{\arbor}         {{\mathsf{a}}}
\newcommand{\din}           {{d_{in}}}
\newcommand{\dout}          {{d_{out}}}
\newcommand{\myQ}           {{R}}

\begin{document}
\title{\Large\bfseries An efficient algorithm for $1$-dimensional (persistent) path homology}
\author[1]{Tamal K. Dey \thanks{tamaldey@cse.ohio-state.edu}}
\author[1]{Tianqi Li \thanks{li.6108@osu.edu}}
\author[1]{Yusu Wang\thanks{yusu@cse.ohio-state.edu}}
\affil[1]{
Department of Computer Science and Engineering, The Ohio State University, Columbus}
\maketitle

\begin{abstract}
This paper focuses on developing an efficient algorithm for analyzing a directed network (graph)
from a topological viewpoint. A prevalent technique for such topological analysis involves computation of homology groups and their persistence. These concepts are well suited for spaces that are not directed. As a result, one needs a concept of homology that accommodates orientations in input space.
Path-homology developed for directed graphs by~\cite{grigor2012homologies} has been effectively adapted for this purpose recently by Chowdhury and M{\'e}moli~\cite{chowdhury2018persistent}. They also give an algorithm to compute this path-homology.
Our main contribution in this paper is an algorithm that computes this path-homology and its persistence more efficiently for the $1$-dimensional ($H_1$) case.
In developing such an algorithm, we discover various structures and their efficient computations that aid computing the $1$-dimensional path-homnology. We implement our algorithm and present some preliminary experimental results.
\end{abstract}
\section{Introduction}
When it comes to graphs, traditional topological data analysis has focused mostly on undirected ones. However, applications in social networks~\cite{milo2002network, chen2009efficient}, brain networks~\cite{varshney2011structural}, and others require processing directed graphs.
Consequently, topological data analysis for these applications needs to be adapted accordingly to account for directedness. Recently, some work~\cite{chowdhury2018functorial,masulli2016topology} have
initiated to address this important but so far neglected issue. 

Since topological data analysis uses persistent homology as a main tool, one needs a notion of homology for directed graphs. Of course, one can forget the directedness and consider the underlying undirected graph as a simplicial $1$-complex and use a standard persistent homology pipeline for the analysis. However, this is less than desirable because the important information involving
directions is lost. Currently, there are two main approaches that have been proposed for dealing with
directed graphs. One uses directed clique complexes~\cite{masulli2016topology, dotko2016topological} and the other uses the concept of path homology~\cite{grigor2012homologies}. In the first approach, a $k$-clique in the input directed graph is turned into a $(k-1)$-simplex if the clique has a single source and a single sink. The resulting simplicial complex is subsequently analyzed with the usual persistent homology pipeline. One issue with this approach is that there could be very few cliques with the required condition and thus accommodating only a very few higher dimensional simplices. In the worst case, only the undirected graph can be returned as the directed clique complex if each 3-clique is a directed cycle. The second approach based on path homology alleviates this deficiency. Furthermore, certain natural functorial properties, such as K\"{u}nneth formula, do not hold for the clique complex ~\cite{grigor2012homologies}.

The path homology, originally proposed by Grigoryan, Lin, Muranov and Yau in 2012~\cite{grigor2012homologies} and later studied by ~\cite{grigor2014homotopy,grigor2015cohomology,chowdhury2018persistent}, has several properties that make it a richer mathematical structure. For example, there is a concept of homotopy under which the path homology is preserved; it accommodates K\"unneth formula; and the path homology theory is dual to the cohomology theory of digraphs introduced in~\cite{grigor2015cohomology}.
Furthermore, persistent path homology developed in \cite{chowdhury2018persistent} is shown to respect a stability property for its persistent diagrams.

To use path homologies effectively in practice, one needs efficient algorithms to compute them. 
In particular, we are interested in developing efficient algorithms for computing $1$-dimensional  path homology and its persistent version because even for this case the 
current state of the art is far from satisfactory: Given a directed graph $G$ with $n$ vertices, the most efficient algorithm proposed in~\cite{chowdhury2018persistent} has a time complexity $O(n^9)$ (more precisely, their algorithm takes $O(n^{3+3d})$ to compute the ($d-1$)-dimensional persistent path-homology).

The main contribution of this paper is stated in Theorem~\ref{thm:main}. 
The reduced time complexity of our algorithm can be attributed to the fact that we compute the
boundary groups more efficiently. In particular, it turns out that for $1$-dimensional path homology, the boundary group is determined by bigons, certain triangles, and certain quadrangles in the input directed graph. The bigons and triangles can be determined relatively easily. It is the boundary quadrangles whose computation and size determine the time complexity. The authors in~\cite{chowdhury2018persistent} compute a basis of these boundary quadrangles by constructing a certain generating set for the 2-dimensional chain group by a nice column reduction algorithm (being different from the standard simplicial homology, it is non-trivial to do reduction for path homology). 
We take advantage of the concept of \emph{arboricity} and related results in graph theory, together with other efficient strategies,
to enumerate a much smaller set generating the boundary quadrangles. Computing the cycle and boundary groups efficiently both for non-persistent and persistent homology groups is the key to our improved time complexity.

\begin{theorem}\label{thm:main}
Given a directed graph $G$ with $n$ vertices and $m$ edges, set $r = \min\{ \arbor(G)m,$ $ \sum_{(u,v)\in E} (\din(u)$ $ + \dout(v))\}$, where $\arbor(G)=O(n)$ is the so-called arboricity of $G$, and $d_{in}(u)$ and $d_{out}(u)$ are the in-degree and out-degree of $u$, respectively. 
There is an $O(r m^{\omega-1}+m\alpha (n))$ time algorithm for computing the \emph{$1$-dimensional persistent path homology} for $G$ where $\omega < 2.373$ is the exponent for matrix multiplication\footnote{That is, the fastest algorithm to multiply two $r \times r$ matrices takes time $O(r^\omega)$.}, and $\alpha (\cdot)$ is the inverse Ackermann function.

This also gives an $O(r m^{\omega-1}+m\alpha (n))$ time algorithm for computing the $1$-dimensional path homology $\HH_1$ of $G$.

In particular, for a planar graph $G$, $\arbor(G) = O(1)$ and the time complexity becomes $O(n^\omega)$. 
\end{theorem}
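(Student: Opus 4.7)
The plan is to reduce the 1-dimensional (persistent) path homology computation to three sub-tasks whose running times sum to the bound claimed: (a) maintaining a basis for the 1-cycle group $Z_1$, (b) enumerating a compact generating set for the 1-boundary group $B_1$, and (c) reducing the resulting boundary matrix (with filtration order in the persistent case) to read off the homology, respectively the persistence pairs.

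For (a), a 1-chain is a cycle iff its edge-boundary vanishes, so a basis of $Z_1$ consists of the fundamental cycles with respect to a spanning forest of the underlying undirected graph. In the filtered setting the forest is maintained under edge insertions with a union-find structure, accounting for the $O(m\alpha(n))$ term and producing a basis of size $m-n+c$. For (b), I would first argue that $B_1$ is generated by the $\partial_2$-images of three families of allowed 2-chains: bigons $(u,v)+(v,u)$ for reciprocal edge pairs, directed triangles contributing $(u,v)+(v,w)-(u,w)$ when $uvw$ has all three directed edges, and \emph{boundary quadrangles} $(u,v,w)-(u,v',w)$ where both length-2 paths $u{\to}v{\to}w$ and $u{\to}v'{\to}w$ exist but the shortcut $(u,w)\notin E$. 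Bigons are listed in $O(m)$ time by one edge scan; triangles are listed in $O(\arbor(G)\cdot m)$ time by the Chiba--Nishizeki arboricity-based enumeration. The heart of the argument is the quadrangle bound: I would show that a subset of boundary quadrangles of size at most $r=\min\{\arbor(G) m,\sum_{(u,v)\in E}(\din(u)+\dout(v))\}$ already spans the quadrangle-contribution to $B_1$, and can be produced in near-linear-in-$r$ time. The $\sum(\din(u)+\dout(v))$ form is obtained by pivoting on the middle edge $(u,v)$ and iterating over predecessors of $u$ and successors of $v$; the $\arbor(G) m$ form follows from an arboricity-style forest decomposition.

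For (c), I express the $O(r)$ boundary generators from (b) in the fundamental-cycle basis of (a), obtaining a sparse matrix $M$ of shape $(m-n+c)\times O(r)$. In the non-persistent case, computing $\dim H_1 = \dim Z_1 - \operatorname{rank} M$ suffices; in the persistent case, I sort the columns by their appearance time in the filtration and perform the left-to-right persistence reduction on $M$. Either way, using a blocked fast matrix-multiplication based reduction, the cost is $O(rm^{\omega-1})$, which is the dominant term. Summing the three contributions yields the stated $O(rm^{\omega-1}+m\alpha(n))$ bound, and specializing to planar graphs (where $\arbor(G)=O(1)$ and $m=O(n)$) gives the $O(n^\omega)$ corollary.

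The main obstacle is the structural claim in step (b) for the quadrangles: I need to prove that the enumerated subset of size at most $r$ really does generate the quadrangle part of $B_1$ modulo the already-handled bigons and triangles, because in general the number of length-2 path pairs with common endpoints can be as large as $\Theta(n^4)$. This requires a careful linear-algebraic reduction inside the path chain complex, showing that any allowed 2-path whose $\partial_2$-image lies in $C_1$ can be rewritten, modulo bigons and triangles, as a combination of quadrangles indexed by the pivoting scheme above. Once this structural lemma is in place, together with the enumeration time bound, the theorem follows by combining the three complexity contributions.
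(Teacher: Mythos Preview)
Your proposal is correct and follows essentially the same three-step architecture as the paper: a union--find spanning forest for $\mathsf{Z}_1$ (the $m\alpha(n)$ term), the structural lemma that $\mathsf{B}_1$ is generated by bigons, boundary triangles, and boundary quadrangles together with an $O(r)$-size generating subset of these, and a fast-matrix-multiplication reduction of the resulting $m\times O(r)$ boundary matrix in $O(rm^{\omega-1})$ time. You have also correctly identified the crux, namely bounding the number of quadrangle generators by $r$.

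One execution point differs from the paper and is worth being aware of. You describe obtaining the two bounds in $r$ by two separate enumerations (degree-based pivoting for the $\sum_{(u,v)}(\din(u)+\dout(v))$ bound, ``arboricity-style forest decomposition'' for the $\arbor(G)m$ bound) and taking the smaller. The paper instead runs a \emph{single} incremental procedure: when inserting $e_s=(u,v)$ it adds at most one quadrangle per $2$-path through $e_s$ and then prunes further (so the set $\candidateC_s$ has size $O(\din(u)+\dout(v))$), and it proves---via an injective map into the Chiba--Nishizeki triple-list representation of all undirected quadrangles---that the \emph{same} output is also bounded by $O(\arbor(G)m)$. This matters because the enumeration \emph{time} is only $O(m+\sum(\din+\dout))$, not $O(r)$; the paper absorbs this into $O(rm)\le O(rm^{\omega-1})$, so your ``near-linear-in-$r$'' claim for the enumeration time is slightly too strong, though harmless for the final bound. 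Your batch-then-sort treatment of persistence is equivalent to the paper's incremental reduction, since each bigon/triangle/quadrangle becomes a boundary exactly when its last edge appears.
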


The \emph{arboricity} $\arbor(G)$ of a graph $G$ mentioned in Theorem~\ref{thm:main} denotes the minimum number of edge-disjoint spanning forests into which G can be decomposed~\cite{harary1971graph}. 
It is known that in general, $\arbor(G) = O(n)$, but it can be much smaller. For example, $\arbor(G)=O(1)$ for planar graphs and $\arbor(G) = O(g)$ for a graph embedded on a genus-$g$ surface \cite{chiba1985arboricity}. 
Hence, for planar graphs, we can compute $1$-dimensional persistent path homology in $O(n^{\omega})$ time whereas the algorithm in~\cite{chowdhury2018persistent} takes $O(n^{5})$ time \footnote{The original time complexity stated in the paper is $O(n^9)$ for $1$-dimensional case. However, one can improve it to $O(n^5)$ by a more refined analysis for planar graphs.}.

\paragraph{Organization of the paper.} 

After characterizing the $1$-dimensional path homology group $\mathsf{H}_1$ in Section \ref{sec:characterization}, we first propose a simple algorithm to compute it. In Section~\ref{sec:perH1}, we consider its persistent version and present an improved and more efficient algorithm. In Section~\ref{sec:application}, we develop an algorithm to compute $1$-dimensional \emph{minimal path homology basis}~\cite{dey2018efficient, erickson2005greedy}, and also show  experiments demonstrating the efficiency of our new algorithms.

\section{Background}\label{sec:bg}

We briefly introduce some necessary background for path homology. Interested readers can refer to \cite{grigor2012homologies} for more details. The original definition can be applied to structures beyond directed graphs; but for simplicity, we use directed graphs to introduce the notations. 

Given a directed graph $G = (V, E)$, we denote $(u, v)$ as the directed edge from $u$ to $v$. A \emph{self-loop} is defined to be the edge $(u, u)$ from $u$ to itself. Throughout this paper, we assume that $G$ does not have self-loops. We also assume that $G$ does not have multi-edges, i.e. for every ordered pair $u, v$, there is at most one directed edge from $u$ to $v$. 
For notational simplicity, we sometimes use index $i$ to refer to vertex $v_i \in V = \{v_1, \ldots, v_n\}$. 

Let $\mathbb{F}$ be a field with 0 and 1 being the additive and  multiplicative identities respectively. We use $-a$ to denote the additive inverse of $a$ in $\mathbb{F}$.
An {\em{elementary $d$-path}} on $V$ is simply a sequence $i_0, i_1, \cdots, i_d$ of $d+1$ vertices in $V$. We denote this path by $e_{i_0, i_1, \cdots, i_d}$. 
Let $\Lambda_d = \Lambda_d (G, \mathbb{F})$ denote the $\mathbb{F}$-linear space of all linear combinations of elementary $d$-paths with coefficients from $\mathbb{F}$. It is easy to check that the set $\{e_{i_0,\cdots, i_d} \mid i_0, \cdots, i_d \in V \}$ is a basis for $\Lambda_d$. 
Each element $p$ of $\Lambda_d$ is called a \emph{$d$-path}, and it can be written as 
\[p =\sum\nolimits_{i_0, \cdots, i_d \in V}a_{i_0\cdots i_d}e_{i_0\cdots i_d}, \text{ where }a_{i_0\cdots i_d} \in \mathbb{F}. \]

Similar to simplicial complexes, there is a well-defined \emph{boundary operator $\partial :\Lambda_d\to \Lambda_{d-1}$}:  
\[\partial e_{i_0\cdots i_d}=\sum_{i_0, \cdots, i_d \in V}(-1)^je_{i_0\cdots\hat{i}_j\cdots i_d},\] 
where $\hat{i}_k$ means the omission of index $i_k$.
The boundary of a path $p = \sum\nolimits_{i_0, \cdots, i_d \in V} a_{i_0\cdots i_d} \cdot e_{i_0\cdots i_d}$, is thus $\partial p = \sum\nolimits_{i_0, \cdots, i_d \in V} a_{i_0\cdots i_d} \cdot \partial e_{i_0\cdots i_d}$.
We set $\Lambda_{-1}=0$ and note that $\Lambda_0$ is the set of $\mathbb{F}$-linear combinations of vertices in $V$. 

Lemma 2.4 in \cite{grigor2012homologies} shows that $\partial^2=0$. 

Next, we restrict to real paths in directed graphs. Specifically, given a directed graph $G = (V, E)$, call an elementary $d$-path $e_{i_0, \cdots, i_d}$ \emph{allowed} if there is an edge from $i_k$ to $i_{k+1}$ for all $k$. Define $\mathcal{A}_d$ as the space of all allowed $d$-paths, that is, $\mathcal{A}_d:={\rm span}\{e_{i_0\cdots i_d} : e_{i_0\cdots i_d} \text{ is allowed}\}$.
An elementary $d$-path $i_0\cdots i_d$ is called \emph{regular} if $i_k \neq i_{k+1}$ for all $k$, and is \emph{irregular} otherwise. Clearly, every allowed path is regular since there is no self-loop. However, the boundary map $\partial$ on $\Lambda_d$ may create a term resulting into an irregular path.
For example, $\partial e_{uvu}=e_{vu}-e_{uu}+e_{uv}$ is irregular because of the term $e_{uu}$. 
To deal with this case, the term containing consecutive repeated vertices is identified with $0$~\cite{grigor2012homologies}. 
Thus, for the previous example, we get $\partial e_{uvu}=e_{vu}-0+e_{uv}=e_{vu}+e_{uv}$. The boundary map $\partial$ on $\mathcal{A}_d$ is taken to be the boundary map for $\Lambda_d$ restricted on $\mathcal{A}_d$ with this modification:
where all terms with consecutive repeated vertices created by the boundary map $\partial$ are replaced with $0$'s.

Unfortunately, after restricting to the space of allowed paths $\mathcal{A}_*$, the inclusion that $\partial \mathcal{A}_d \subset \mathcal{A}_{d-1}$ may not hold any more; that is, the boundary of an allowed $d$-path is not necessarily an allowed $(d-1)$-path. 
To this end, we adopt a stronger notion of allowed paths: an allowed path $p$ is \emph{$\partial$-invariant} if $\partial p$ is also allowed. 
Let $\Omega_d := \{p \in \mathcal{A}_d \mid \partial p \in \mathcal{A}_{d-1}\}$ be the space generated by all $\partial$-invariant paths. We then have 
$\partial \Omega_d \subset \Omega_{d-1}$ (as $\partial^2 = 0$). 
This gives rise to the following \emph{chain complex of $\partial$-invariant allowed paths}: 
\[\cdots \Omega_d\xrightarrow[]{\partial}\Omega_{d-1}\xrightarrow[]{\partial}\cdots\Omega_d\xrightarrow[]{\partial}\Omega_0\xrightarrow[]{\partial}0. \]

We can now define the homology groups of this chain complex.
The $d$-th cycle group is defined as $\mathsf{Z}_d={\rm Ker}\, \partial|_{\Omega_d}$, and elements in $\mathsf{Z}_d$ are called $d$-cycles. The $d$-th boundary group is defined as $\mathsf{B}_d={\rm Im}\,\partial|_{\Omega_{d+1}}$, with elements of $\mathsf{B}_d$ being called $d$-boundary cycles (or simply $d$-boundaries). 
 The resulting \emph{$d$-dimensional path homology group} is
$\mathsf{H}_d(G, \mathbb{F}) = \mathsf{Z}_d /\mathsf{B}_d$.

\begin{figure}[H]
   \centering
    \begin{subfigure}[t]{0.3\textwidth}
        \centering
        \includegraphics[width=3cm]{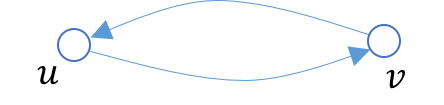}
        \subcaption{Bigon}
    \end{subfigure}%
    ~ 
    \begin{subfigure}[t]{0.3\textwidth}
        \centering
        \includegraphics[width=3cm]{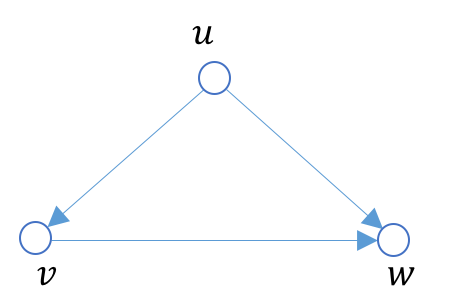}
        \subcaption{ Boundary triangle}
    \end{subfigure}
    ~
    \begin{subfigure}[t]{0.3\textwidth}
        \centering
        \includegraphics[width=3cm]{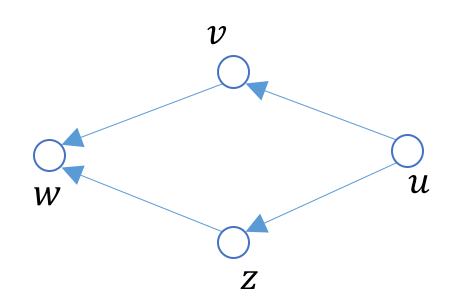}
        \subcaption{ Boundary quadrangle}
    \end{subfigure}
    \caption{Examples of 1-boundaries}
    \label{fig:1bnd}
\end{figure}

\subsection{Examples of 1-boundaries}\label{subsec:bgexamples} 

Below we give three examples of 1-boundaries; see Figure~\ref{fig:1bnd}.
\paragraph{Bi-gon:}A \emph{bi-gon} is a 1-cycle $e_{uv} + e_{vu}$ consisting of two edges $(u,v)$ and $(v, u)$ from $E$; see Figure~\ref{fig:1bnd}(a).  
Consider the 2-path $e_{uvu}$. We have that its boundary is $\partial(e_{uvu})=e_{vu}-e_{uu}+e_{uv} = e_{vu} + e_{uv}$. Since both $e_{vu}$ and $e_{uv}$ are allowed 1-paths, it follows that any bi-gon $e_{vu}+e_{uv}$ of $G$ is necessarily a 1-boundary.

\paragraph{Boundary triangle:}
Consider the 1-cycle $C=e_{vw}-e_{uw}+e_{uv}$ of $G$ (it is easy to check that $\partial\, C=0$). 
Now consider the 2-path $e_{uvw}$: its boundary is then $\partial(e_{uvw})=e_{vw}-e_{uw}+e_{uv}=C$. 
Note that every summand in the boundary is allowed. Thus $C$ is a 1-boundary. 
We call any triangle in $G$ isomorphic to $C$ a \emph{boundary triangle}. 
Note that a boundary triangle always has one sink and one source; see the source $u$ and sink $w$ in Figure~\ref{fig:1bnd}(b). In what follows, we use $(u, w \mid v)$ to denote a boundary triangle where $u$ is the source and $w$ is the sink.

\paragraph{Boundary quadrangle:}
Consider the 1-cycle $C=e_{uv}+e_{vw}-e_{uz}-e_{zw}$ from $G$. It is easy to check that $C$ is the boundary of the $2$-path $e_{uvw}-e_{uzw}$, as $\partial(e_{uvw}-e_{uzw})=e_{vw}-e_{uw}+e_{uv}-(e_{zw}-e_{uw}+e_{uz})=e_{vw}+e_{uv}-e_{zw}-e_{uz}=C$.
We call any quadrangle isomorphic to $C$ a \emph{boundary quadrangle}.

In the remainder of the paper, we use $\myQ(u, v, w,z)$ to represent a quadrangle; i.e, a 1-cycle consisting of 4 edges whose \emph{undirected version} has the form $(u,v) + (v, w) + (w,z) + (z,u)$.  (Note that a quadrangle may not be a boundary quadrangle). 
We denote a boundary quadrangle $e_{uv}+e_{vw}-e_{uz}-e_{zw}$ by $\{u, w \mid v, z\}$, where $u$ and $w$ are the source and sink of this boundary quadrangle respectively.

\section{Computing \texorpdfstring{$1$}{Lg}-dimensional path homology \texorpdfstring{$\HH_1$}{Lg}}\label{sec:characterization}

Note that the $1$-dimensional $\partial$-invariant path space $\Omega_1=\Omega_1(G)$ is the space generated by all edges~\cite{grigor2012homologies} because the boundary of every edge is allowed by definition. 

Now consider the $1$-cycle group $\mathsf{Z}_1\subseteq \Omega_1$; that is,
$\mathsf{Z}_1$ is the kernel of $\partial$ applied to $\Omega_1$.
We show below that a basis of $\mathsf{Z}_1$ can be computed by considering a spanning tree of 
the undirected version of $G$, which denoted by $G_u$. This is well known when $\mathbb{F}$ is $\mathbb{Z}_2$. 
It is easy to see that this spanning tree based construction also works for arbitrary field $\mathbb{F}$. 

Specifically, let $T$ be a rooted spanning tree of $G_u$ with root $r$, and $\bar{T}:= G_u\setminus T$.
For every edge $e=(v_1, v_2)\in \bar T$, let $c_e$ be the
$1$-cycle (under $\mathbb{Z}_2$) obtained by summing $e$ and all edges on the paths $\pi_1$ and $\pi_2$ between $v_1$ and $r$, and $v_2$ and $r$ respectively. The cycles $\{c_e, e\in \bar T\}$ form a basis of $1$-cycle group of $G_u$ under $\mathbb{Z}_2$ coefficient. Now for every such cycle $c_e$ in $G_u$, we also have a cycle in $\Omega_1(G)$ containing same edges with $c_e$ which are assigned
a coefficient $1$ or $-1$ depending on their orientations in $G$. We call this $1$-cycle in $\Omega_1(G)$ also $c_e$. Then we have the following proposition.

\begin{proposition}\label{prop:Z1}
The cycles $\{c_e|e\in \bar T\}$ in $\Omega_1(G)$ form a basis for $\mathsf{Z}_1$ under any coefficient field $\mathbb{F}$.
\end{proposition}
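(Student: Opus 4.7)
The plan is to follow the classical fundamental-cycle argument from graph theory, but with the extra care required to track signs over an arbitrary field $\mathbb{F}$ rather than just $\mathbb{Z}_2$. I will proceed in three steps: verify each $c_e$ really lies in $\mathsf{Z}_1$, show the family $\{c_e : e \in \bar T\}$ is linearly independent, and compute $\dim_{\mathbb{F}} \mathsf{Z}_1$ to confirm that $|\bar T|$ generators suffice. Without loss of generality I assume $G_u$ is connected; otherwise the same argument applies to a spanning forest, one tree per connected component.

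To check that $\partial c_e = 0$, the idea is to view $c_e$ as a signed version of a closed walk in $G_u$: concatenate $e = (v_1, v_2)$ with the tree path from $v_2$ to $r$ and the reverse of the tree path from $v_1$ to $r$. Each edge appearing in this walk carries coefficient $+1$ or $-1$ in $c_e$ according to whether its orientation in $G$ agrees with the walk direction. Since $\partial e_{uv} = e_v - e_u$, verifying $\partial c_e = 0$ reduces to a local check at each vertex $v$ on the walk: the two incident traversal-edges contribute to the coefficient of $e_v$ in $\partial c_e$ with opposite signs, independent of their orientations in $G$, so the contributions cancel. This sign bookkeeping is the step that is least automatic; everything else is field-independent.

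Linear independence then follows from the observation that each $c_e$ is the unique member of the family containing the non-tree edge $e$: for $e' \neq e$ in $\bar T$, the cycle $c_{e'}$ uses only $e'$ itself together with edges of $T$, so $e$ does not appear in $c_{e'}$. Reading off the coefficient of each non-tree edge in a hypothetical dependence $\sum_{e \in \bar T} \alpha_e c_e = 0$ therefore forces every $\alpha_e = 0$. Finally, to match dimensions I apply rank-nullity to $\partial : \Omega_1 \to \Omega_0$: its image is spanned by $\{e_v - e_u : (u,v) \in E\}$, and the standard rank bound for the vertex-edge incidence matrix of a connected graph over any field yields $\dim \mathrm{Im}\,\partial = n - 1$. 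Hence $\dim \mathsf{Z}_1 = m - (n-1) = |\bar T|$, matching the size of our independent family, so $\{c_e\}$ is a basis as claimed.
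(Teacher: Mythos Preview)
Your proof is correct, and the independence argument matches the paper's exactly. Where you diverge is in establishing that the family spans $\mathsf{Z}_1$: the paper argues this directly by an elimination procedure---given any nonzero $1$-cycle $c$, it must contain some $e_0 \in \bar T$ with nonzero coefficient $a_0$ (since a nonzero chain supported on the tree $T$ cannot have zero boundary), and replacing $c$ by $c - a_0 c_{e_0}$ strictly decreases the number of non-tree edges with nonzero coefficient, terminating at $0$. You instead count dimensions via rank--nullity applied to $\partial : \Omega_1 \to \Omega_0$, using the standard fact that the signed incidence matrix of a connected graph has rank $n-1$ over any field. Your route is arguably cleaner and more self-contained (and you also take the extra care to verify $\partial c_e = 0$, which the paper asserts without comment); the paper's route has the mild advantage of being constructive, exhibiting the coordinates of an arbitrary cycle in the basis $\{c_e\}$ explicitly.
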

\begin{proof}
First, it is obvious that the cycles $\{c_e|e\in \bar T\}$ are independent because every cycle contains a unique edge that does not exist in other cycles, which means any cycle cannot written as linear combination of other cycles in the set. Now what remains is to show that the cycles $\{c_e|e\in \bar T\}$ generate $\mathsf{Z}_1$. Consider any $1$-cycle $c\not=0$ in $\mathsf{Z}_1$ with coefficients in $\mathbb{F}$. Observe that $c$ must have at least one edge from $\bar{T}$ because otherwise it will have non-zero coefficients only on edges in $T$ whose boundary cannot be $0$. Consider an edge $e_0\in \bar{T}$ with non-zero coefficient $a_0\in \mathbb{F}$ in $c$. The cycle $c'= -a_0\cdot c_{e_0} + c$ has $e_0$ with zero coefficient.
If $c'$ is not $0$, continue the argument again and we are guaranteed to derive a null cycle ultimately because every time we make the coefficient of an edge belonging to $\bar T$ zero.
This means that we have $c-a_0c_{e_0}-\cdots-a_kc_{e_k}=0$ for some $k$. In other words
$c=\sum_{i=0}^k a_i c_{e_i}$. It immediately follows that $\{c_e|e\in \bar{T}\}$ form a basis for
$\mathsf{Z}_1$.
\end{proof}

Now, we show a relation between $1$-dimensional homology, cycles, bigons, 
triangles and quadrangles. 
Recall that bi-gons, boundary triangles and boundary quadrangles are specific types of $1$-dimensional boundaries with two, three or four vertices, respectively; see Section \ref{subsec:bgexamples}. 
The following theorem is similar to Proposition 2.9 from \cite{grigor2014homotopy}, where the statement there is under coefficient ring $\mathbb{Z}$. For completeness, we include the (rather similar) proof for our case in Appendix~\ref{apex:bnd}. 

\begin{theorem}\label{thm:B1}
Let $G=(V, E)$ be a directed graph. 
Let $\mathsf{Q}$ denote the space 
generated by all boundary triangles, boundary quadrangles and bi-gons in $G$. 
Then we have $\mathsf{B}_1= \mathsf{Q}$.
\end{theorem}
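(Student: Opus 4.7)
The plan is to establish the two inclusions $\mathsf{Q} \subseteq \mathsf{B}_1$ and $\mathsf{B}_1 \subseteq \mathsf{Q}$ separately. The first inclusion is essentially free: Section~\ref{subsec:bgexamples} already exhibits, for each generator of $\mathsf{Q}$, an explicit element of $\Omega_2$ whose boundary is exactly that generator---namely $e_{uvu}$ for a bigon, $e_{uvw}$ for a boundary triangle, and $e_{uvw} - e_{uzw}$ for a boundary quadrangle. Hence any linear combination of these generators lies in $\mathsf{B}_1$.

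For the harder inclusion $\mathsf{B}_1 \subseteq \mathsf{Q}$, the strategy is to exhibit a spanning set of $\Omega_2$ whose image under $\partial$ is visibly contained in $\mathsf{Q}$; since $\mathsf{B}_1 = \partial(\Omega_2)$, this is enough. The first step is to unpack the $\partial$-invariance condition explicitly. For an allowed $2$-path $p = \sum a_{uvw} e_{uvw}$, where the sum runs over triples with $(u,v), (v,w) \in E$, the identity $\partial p = \sum a_{uvw}(e_{vw} - e_{uw} + e_{uv})$ shows that the only summands that may fail to be allowed are the $e_{uw}$ terms, and only when $u \neq w$ and $(u,w) \notin E$ (when $u = w$ the regularity convention kills $e_{uu}$). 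Consequently, $p \in \Omega_2$ if and only if $\sum_{v \in V_{uw}} a_{uvw} = 0$ for every such pair $(u,w)$, where $V_{uw} := \{v : (u,v), (v,w) \in E\}$.

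The key observation is that this linear condition couples only coefficients sharing a common source--sink pair $(u,w)$, which lets me decompose a basis of $\Omega_2$ block by block. In each block I would read off an obvious basis: bigon generators $e_{uvu}$ when $w = u$; triangle generators $e_{uvw}$ for each $v \in V_{uw}$ when $(u,w) \in E$; and, when $u \neq w$, $(u,w) \notin E$ and $|V_{uw}| \geq 2$, ``hinged'' difference generators $e_{uvw} - e_{uz_0 w}$ for $v \in V_{uw}\setminus\{z_0\}$, where $z_0 \in V_{uw}$ is fixed arbitrarily. Applying $\partial$ to these three types of generators yields, respectively, a bigon $e_{uv}+e_{vu}$, a boundary triangle $(u, w \mid v)$, and a boundary quadrangle $\{u, w \mid v, z_0\}$; all lie in $\mathsf{Q}$, so $\partial(\Omega_2) \subseteq \mathsf{Q}$, finishing the inclusion.

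The main obstacle is phrasing the $\partial$-invariance condition cleanly enough that the basis decomposition drops out. Once one notices that the constraints only couple coefficients with a common $(u,w)$, choosing an anchor $z_0$ inside each such block is exactly what converts the ``forbidden triangle cancellation'' into the quadrangle generators of $\mathsf{Q}$, and the remainder of the argument is routine linear algebra.
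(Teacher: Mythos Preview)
Your proof is correct and rests on the same key observation as the paper's: the $\partial$-invariance condition on an allowed $2$-chain only constrains the coefficients $a_{uvw}$ for fixed source--sink pairs $(u,w)$ with $(u,w)\notin E$, forcing them to sum to zero, which is precisely what produces the quadrangle generators. The paper packages this slightly differently---rather than exhibiting a basis of $\Omega_2$ upfront, it takes an arbitrary $p\in\Omega_2$ and iteratively peels off one term $e_{uvw}$ at a time (pairing it with a companion $e_{uv'w}$ in the non-edge case), reducing the number of nonzero coefficients until nothing remains---but the content is the same. Your block-by-block basis description is arguably cleaner and yields an explicit basis of $\Omega_2$ as a byproduct, whereas the paper's inductive reduction is a bit more hands-on; either way the argument is short and the inclusion $\mathsf{B}_1\subseteq\mathsf{Q}$ follows immediately.
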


\begin{corollary}\label{cor:H1characterization}
The $1$-dimensional path homology group satisfies that $\mathsf{H}_1 = \mathsf{Z}_1/\mathsf{Q}$.
\end{corollary}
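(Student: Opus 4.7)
The corollary is essentially an immediate substitution, so my plan is correspondingly short. The plan is to start from the definition of the $1$-dimensional path homology group, namely $\mathsf{H}_1 = \mathsf{Z}_1 / \mathsf{B}_1$, which was set up in the background Section~\ref{sec:bg} for the chain complex of $\partial$-invariant allowed paths. This identity is purely definitional and requires no additional argument.

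Next, I would invoke Theorem~\ref{thm:B1}, which identifies the boundary group $\mathsf{B}_1$ with the subspace $\mathsf{Q}$ of $\Omega_1$ generated by all bi-gons, boundary triangles, and boundary quadrangles. Substituting $\mathsf{B}_1 = \mathsf{Q}$ into the quotient gives $\mathsf{H}_1 = \mathsf{Z}_1 / \mathsf{Q}$, which is exactly the claimed statement.

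There is no genuine obstacle here: all of the substantive work has already been absorbed into Proposition~\ref{prop:Z1} (which provides a concrete basis of $\mathsf{Z}_1$ via a spanning tree of $G_u$) and Theorem~\ref{thm:B1} (which is the non-trivial assertion that bi-gons, boundary triangles, and boundary quadrangles already suffice to generate every $1$-boundary). The only conceptual point worth flagging in the proof itself is that the identification $\mathsf{B}_1 = \mathsf{Q}$ is valid over an arbitrary coefficient field $\mathbb{F}$, matching the coefficient setting in which $\mathsf{H}_1$ is being defined; once that is noted, the quotient on the right-hand side of Corollary~\ref{cor:H1characterization} is literally the same object as $\mathsf{Z}_1 / \mathsf{B}_1$.
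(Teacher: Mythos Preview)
Your proposal is correct and matches the paper's approach: the corollary is stated without proof immediately after Theorem~\ref{thm:B1}, and the intended argument is exactly the one-line substitution of $\mathsf{B}_1 = \mathsf{Q}$ into the definition $\mathsf{H}_1 = \mathsf{Z}_1/\mathsf{B}_1$. The mention of Proposition~\ref{prop:Z1} is unnecessary for this corollary but harmless.
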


\subsection{A simple algorithm}\label{sec:slowalg}

Theorem \ref{thm:B1} and Corollary \ref{cor:H1characterization} provide us a simple framework to compute $\HH_1$. Below we only focus on the computation of the rank of $\HH_1$; but the algorithm can easily be modified to output a basis for $\HH_1$ as well. 
Later in Section \ref{sec:perH1}, we will develop a 
more efficient and sophisticated algorithm for the $1$-dimensional \emph{persistent} path homology $\HH_1$, which as a by-product, also gives a more efficient algorithm to compute $\HH_1$. 

In the remaining of this paper, we represent each cycle in $\mathsf{Z}_1$ with a vector.
Assume all edges are indexed from $1$ to $m$ as $e_1, \cdots, e_m$ where $m$ is the number of edges. Then, each 1-cycle $C$ is an $m$-dimensional vector, where $C[i] \in \mathbb{F}$ records the coefficient for edge $e_i$ in $C$. 

\begin{algorithm}[H]
\caption{A simple first algorithm to compute rank of $\HH_1$}
\label{alg:slowAlg}
 \begin{algorithmic}[1] 
 	\Procedure{CompH1-Simple}{$G$, $t$}
	\State {\sf (Step 1)}: Compute rank of 1-cycle group $\mathsf{Z}_1$
	\State {\sf (Step 2)}: Compute rank of 1-boundary group $\mathsf{B}_1$ 
	\State ~~~~~~{\sf (Step 2.a)} Compute a \emph{generating set} $\candidateC$ of 1-boundary cycles that generates $\mathsf{B}_1$
	\State ~~~~~~{\sf (Step 2.b)}
	From $\candidateC$ compute a basis for $\mathsf{B}_1$
	\State Return $rank(\HH_1) = rank(\mathsf{Z}_1) - rank(\mathsf{B}_1)$. 
	\EndProcedure
\end{algorithmic}
\end{algorithm}

\subparagraph*{{\sf (Step 1)}: cycle group $\mathsf{Z}_1$.}By Proposition \ref{prop:Z1}, $rank(\mathsf{Z}_1) = |E| - |V| + 1$ for directed graph $G = (V, E)$. The computation of the rank takes $O(1)$ time (or $O(|V|^2)$ time if we need to output a basis of it explicitly).  

\subparagraph*{{\sf (Step 2)}: boundary group $\mathsf{B}_1$.} 
Note that by Theorem \ref{thm:B1}, we can compute the set of all bigons, boundary triangles and boundary quadrangles as a \emph{generating set} $\candidateC$ of 1-boundary cycles (meaning that it generates the boundary group $\mathsf{B}_1$) for {\sf (Step 2.a)}. 
However, such a set $\candidateC$ could have size $\Omega(n^2)$ even for a planar graph, where $n = |V|$; see Figure \ref{fig:quadratic}. (For a general graph, the number of boundary quadrangles could be $\Theta(n^4)$.) 

\begin{figure}[H]
\includegraphics[width=5cm]{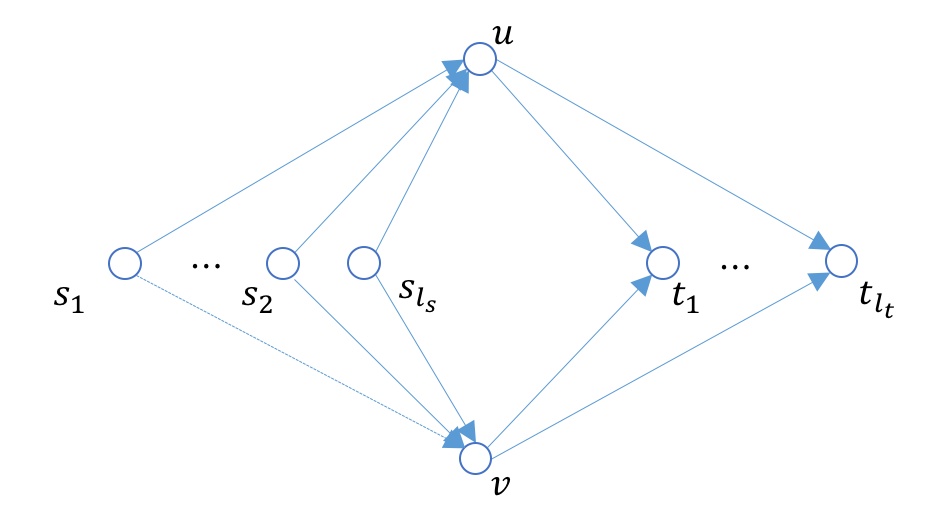}
\centering
\caption{There are $n$ vertices but $l_s\cdot l_t=\Theta(n^2)$ quadrangles, $l_s=\lfloor(n-2)/2\rfloor$ and $l_t=\lceil(n-2)/2\rceil$}
\label{fig:quadratic}
\end{figure}

To make {\sf (Step 2.b)} efficient, we wish to have a generating set $\candidateC$ of 1-boundary cycles with \emph{small cardinality}. To this end, we leverage a classical result of \cite{chiba1985arboricity} to reduce the size of $\candidateC$.

Given an undirected graph $G$ where the number of multi-edges between any two vertices is constant, its \emph{arboricity $\arbor(G)$} 
is the minimum
number of edge-disjoint spanning forests which G can be decomposed into~\cite{harary1971graph}. 
An alternative definition is 
\[\arbor(G) = \max_{H ~\text{is~a~subgraph~of}~ G} \frac{|E(H)|}{|V(H)|-1}.\]
From this definition, it is easy to see (and well-known, see e.g, \cite{chiba1985arboricity}) that:
\begin{obs}\label{obs:arboricity}
Given an undirected graph $G$ where the number of multi-edges between any two vertices is constant:\\
(1). If $G$ is a planar graph, or a graph with bounded vertex degrees, then $\arbor(G) = O(1)$. \\
(2). If $G$ is a graph embedded on a genus $g$ surface, then $\arbor(G) = O(g)$. \\
(3). In general, if $G$ does not contain self-loops, then $\arbor(G) = O(n)$.
\end{obs}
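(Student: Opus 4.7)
The plan is to derive all three bounds directly from the max-density characterization
\[
\arbor(G) \;=\; \max_{H \subseteq G}\, \frac{|E(H)|}{|V(H)|-1}
\]
quoted above. In each case I would upper-bound $|E(H)|$ for every subgraph $H$ by a linear function of $|V(H)|$ whose slope reflects the structural restriction on $G$, then divide. (The edge case $|V(H)|\le 2$ is handled separately: since $G$ has no self-loops and at most a constant number of multi-edges between any pair of vertices, $|E(H)|$ is bounded by a constant in that range, so these tiny subgraphs never dominate the maximum.)

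For part (1), I would use that every subgraph of a planar graph is planar. Euler's formula applied to a simple planar graph $H$ with $|V(H)|\ge 3$ gives $|E(H)| \le 3|V(H)| - 6$, so the ratio is at most $3$; with the constant bound on multi-edges this contributes only another constant factor, yielding $\arbor(G) = O(1)$. For the bounded-degree case, the handshake lemma gives $|E(H)| \le \tfrac{\Delta}{2}|V(H)|$ for every subgraph $H$, where $\Delta$ is the maximum degree, so $\arbor(G)\le \Delta/2 = O(1)$.

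For part (2), every subgraph of a graph embedded on a surface of genus $g$ is itself embeddable on the same surface. Combining the generalized Euler relation $|V(H)|-|E(H)|+|F(H)| = 2-2g$ with the standard face-edge inequality $2|E(H)|\ge 3|F(H)|$ (valid whenever $H$ has girth at least $3$, and the boundary cases are absorbed into the constant by the multi-edge bound) gives $|E(H)| \le 3|V(H)| + 6g - 6$. Dividing by $|V(H)|-1$ yields $O(1) + O(g/(|V(H)|-1)) = O(g)$.

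For part (3), in a graph without self-loops and with a constant bound $c$ on the multiplicity between any two vertices, $|E(H)| \le c\binom{|V(H)|}{2}$, so the ratio is at most $\tfrac{c}{2}|V(H)| \le \tfrac{c}{2}n = O(n)$. No real obstacle arises: the statement is essentially a one-line application of the Nash-Williams formula to three standard edge-count inequalities, and the only bookkeeping is verifying that the small-$|V(H)|$ corner cases contribute only constants, which the no-self-loop and bounded-multiplicity hypotheses ensure.
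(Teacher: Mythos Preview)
Your argument is correct and follows exactly the route the paper indicates: the paper does not actually prove this observation but simply remarks that it is ``easy to see'' from the max-density characterization of $\arbor(G)$ and cites \cite{chiba1985arboricity} as a reference. Your proposal fills in precisely those details, bounding $|E(H)|$ linearly in $|V(H)|$ via Euler's formula (parts (1) and (2)), the handshake lemma (bounded-degree case), and the trivial $\binom{|V(H)|}{2}$ bound (part (3)), which is the intended one-line derivation.
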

We will leverage some classical results from \cite{chiba1985arboricity}.  
First, to represent quadrangles, we use the following \emph{\mytriple} representation \cite{harary1971graph} 
: a \mytriple{} $(u, v, \{w_1, w_2,\cdots, w_l\})$ means that for each $i$, $w_i$ is adjacent to both $u$ and $v$, where we say $u'$ and $v'$ are adjacent if either $(u',v')$ or $(v', u')$ are in $E$ (i.e, $u'$ and $v'$ are adjacent when disregarding directions). 
Given such a \mytriple{} $\atriple = (u, v, \{w_1, w_2,\cdots, w_l\})$, it is easy to see that $u, w_i, v, w_j$ form the \emph{consecutive vertices} of a quadrangle in the undirected version of graph $G$; and we also say that the undirected quadrangle $R(u, w_i, v, w_j)$ is \emph{covered} by this \mytriple{}. We say that a vertex $z$ is in a \mytriple{} $(u, v, \{w_1, w_2,\cdots, w_l\})$ if it is in the set $\{w_1, w_2,\cdots, w_l\}$.

The size of a \mytriple{} is the total number of vertices contained in it. This \mytriple{} $\atriple$ thus represents $\Theta(l^2)$ number of undirected quadrangles in $G$ succinctly with $\Theta(l)$ size.  

\begin{proposition}[\cite{chiba1985arboricity}]\label{prop:enumeration}
(1) Let $G$ be a connected undirected graph with $n$ vertices and $m$ edges. There is an algorithm listing all the triangles in $G$ in $O(\arbor(G)m)$ time. \\
(2) There is an algorithm to compute a set of \mytriple{}s which covers 
all quadrangles in a
connected graph G in $O(\arbor(G)m)$ time. The total size complexity of all \mytriple{}s is $O(\arbor(G) m)$. 
\end{proposition}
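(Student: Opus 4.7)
The plan is to follow the classical Chiba--Nishizeki approach of sweeping the vertices of $G$ in decreasing degree order and charging the work at each step to incident edges via an arboricity-based inequality. Write $v_1, \ldots, v_n$ for an ordering with $d(v_1) \geq \cdots \geq d(v_n)$, and let $G_i := G \setminus \{v_1, \ldots, v_{i-1}\}$. The workhorse estimate underlying both parts is
\[
\sum_{i=1}^n \sum_{u \in N_{G_i}(v_i)} d_{G_i}(u) \;=\; O(\arbor(G)\cdot m).
\]
I would establish this in two steps. First, since the degeneracy of any graph is at most $2\arbor(G)$, orienting edges along a degeneracy ordering bounds each vertex's out-degree by $2\arbor(G)$, which immediately gives $\sum_{(u,v) \in E} \min(d(u), d(v)) \leq 4\arbor(G)\cdot m$ (because the minimum is at most the degree of the degeneracy-earlier endpoint). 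Second, I would rewrite the nested sum by observing that each edge $(u, w)$ contributes at step $i$ exactly when $v_i \in N(u) \cup N(w)$ and $v_i, u, w \in G_i$, which forces $v_i$ to be processed before both $u$ and $w$; summing these indicators over $i$ bounds the total by $\sum_{(u,w) \in E}(d^+(u) + d^+(w))$, where $d^+(v)$ is the number of neighbors of $v$ processed before $v$ in the degree ordering, and this in turn equals $\sum_v d(v)\cdot d^+(v) = \sum_{(u,v) \in E}\min(d(u), d(v))$.

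For part~(1), the sweep proceeds as follows. At step $i$, mark every $u \in N_{G_i}(v_i)$; then for each such $u$ scan $N_{G_i}(u) \setminus \{v_i\}$, and whenever a neighbor $w$ is marked, output the triangle $\{v_i, u, w\}$. Afterward, unmark and delete $v_i$. For correctness, every triangle $\{a, b, c\}$ has a first-processed vertex, say $a$; at that step $b, c$ are still in $G_i$ and marked as neighbors of $a$, so scanning $u = b$ eventually encounters $w = c$ and reports the triangle. The running time matches the workhorse estimate.

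For part~(2), I would run the same sweep but replace the marking step by a triple-list builder. When processing $v_i$, for each $u \in N_{G_i}(v_i)$ and each $w \in N_{G_i}(u) \setminus \{v_i\}$, append $u$ to a list $L(v_i, w)$; at the end of step $i$, emit the \mytriple{} $(v_i, w, L(v_i, w))$ for every nonempty $L(v_i, w)$ and delete $v_i$. Every appended $u$ is indeed a common neighbor of $v_i$ and $w$, as required by the definition. For coverage, let $R(a, x, b, y)$ be any undirected quadrangle and let $v_i$ be its first-processed vertex; without loss of generality $v_i = a$, whose quadrangle neighbors are $x, y$ and whose diagonal partner is $b$. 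Then $x, y, b \in G_i$; the inner sweep with $u = x, w = b$ appends $x$ to $L(a, b)$, and with $u = y, w = b$ appends $y$ to $L(a, b)$. Thus $\{x, y\} \subseteq L(a, b)$, so the emitted \mytriple{} $(a, b, L(a, b))$ covers the quadrangle. The total number of appends, and hence the total output size, are bounded by $\sum_i \sum_{u \in N_{G_i}(v_i)} (d_{G_i}(u) - 1)$, so both the running time and the output size fall under the workhorse estimate.

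The main obstacle is the workhorse estimate itself: it is where the arboricity parameter enters and it drives both the time and size bounds. A secondary subtlety is the coverage argument for part~(2), where I must check that the two quadrangle neighbors of the first-processed vertex both reach the diagonal partner via the inner sweep, crucially using that the diagonal partner is still in $G_i$ at that step because it is processed later than the pivot.
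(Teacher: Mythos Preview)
The paper does not give its own proof of this proposition; it is quoted directly from Chiba--Nishizeki \cite{chiba1985arboricity} and used as a black box. Your proposal reconstructs essentially the original Chiba--Nishizeki argument: the degree-ordered vertex sweep, the workhorse estimate $\sum_{i}\sum_{u\in N_{G_i}(v_i)} d_{G_i}(u)=O(\arbor(G)m)$, the mark-and-scan triangle listing, and the ``append common neighbours to $L(v_i,w)$'' quadrangle pass are all exactly as in that paper, and your coverage and size arguments are correct.

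The one point where you deviate from Chiba--Nishizeki is the proof of $\sum_{(u,v)\in E}\min(d(u),d(v))=O(\arbor(G)m)$. They argue via the forest decomposition: split $E$ into $\arbor(G)$ forests, root each tree, and for each forest-edge charge $\min(d(u),d(v))$ to the child endpoint, giving $\sum_v d(v)=2m$ per forest and hence $2\arbor(G)m$ total. You instead bound $\min(d(u),d(v))$ by the full degree of the degeneracy-earlier endpoint and then sum using the out-degree bound in the degeneracy orientation, obtaining $\sum_v d(v)\cdot d^{+}_{\mathrm{degen}}(v)\le 2\arbor(G)\cdot 2m$. Both routes are valid; yours trades a slightly worse constant for avoiding the explicit forest decomposition. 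Your second step, rewriting the nested sweep sum as $\sum_{(u,w)\in E}(d^{+}(u)+d^{+}(w))=\sum_v d(v)d^{+}(v)=\sum_{(u,v)\in E}\min(d(u),d(v))$, is also correct (each edge contributes the degree of its later-processed endpoint, which is the minimum under the non-increasing degree order).
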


Using the above result, we can have the following theorem.
\begin{theorem}\label{thm:candidate1}
Let $G=(V, E)$ be a directed graph with $n$ vertices and $m$ edges. 
We can compute a generating set $\candidateC$ of 1-boundary cycles for $\mathsf{B}_1$ with cardinality $O(\arbor(G)m)$ in time $O(\arbor(G)m)$. 
\end{theorem}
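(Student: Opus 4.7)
The plan is to apply Theorem~\ref{thm:B1} to reduce the task to enumerating a generating family for the span of bi-gons, boundary triangles, and boundary quadrangles of $G$. The first two families can be listed directly: a single scan over $E$, using a hash table on the edge set for $O(1)$ membership queries, produces all bi-gons in $O(m)$ time. For boundary triangles I would invoke Proposition~\ref{prop:enumeration}(1) to list every triangle of $G_u$ in $O(\arbor(G)m)$ time, then test in $O(1)$ time each of the at most six source/sink orientations of that triangle; this yields at most $O(\arbor(G)m)$ boundary triangles and fits into the target budget.

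The main obstacle is the boundary quadrangles, since their total number can be as large as $\Theta(n^4)$ in general, and already $\Omega(n^2)$ for the planar example of Figure~\ref{fig:quadratic}. The idea is to apply Proposition~\ref{prop:enumeration}(2) to produce, in $O(\arbor(G)m)$ time, a collection of \mytriple{}s of total size $O(\arbor(G)m)$ that covers every undirected quadrangle, and then to emit from each \mytriple{} only $O(|W|)$ explicit boundary quadrangles as generators. For a \mytriple{} $\atriple = (u, v, W)$, the boundary quadrangles supported on quadrangles of the form $R(u, w_i, v, w_j)$ fall into two cases, dictated by the fact that the source and sink of a boundary quadrangle are the two opposite vertices of the quadrangle: (Case A) $u, v$ are the source-sink pair with intermediates $w_i, w_j \in W$, or (Case B) a pair $w_i, w_j \in W$ is the source-sink pair with $u, v$ as intermediates. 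I would partition $W$ in $O(|W|)$ time according to the direction patterns of the $\{u,w\}$- and $\{v,w\}$-edges, isolating those $w$'s admissible in each sub-case.

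The size bound follows from a telescoping identity. In Case A (say $u$ source, $v$ sink), fixing any representative $w_0$ among the admissible vertices of $W$, one checks by direct expansion of the $e_{\cdot\cdot}$ sums that $\{u, v \mid w_i, w_j\} = \{u, v \mid w_0, w_j\} - \{u, v \mid w_0, w_i\}$, so emitting $\{u, v \mid w_0, w_i\}$ for each other admissible $w_i$ suffices. In Case B, fixing one ``in-type'' representative $w_0^-$ (with $(w_0^-,u), (w_0^-,v) \in E$) and one ``out-type'' representative $w_0^+$ (with $(u,w_0^+), (v,w_0^+) \in E$), a three-term identity $\{w_i, w_j \mid u, v\} = \{w_i, w_0^+ \mid u, v\} + \{w_0^-, w_j \mid u, v\} - \{w_0^-, w_0^+ \mid u, v\}$ similarly reduces the emission to $O(|W|)$ quadrangles per \mytriple{}. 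Since every undirected quadrangle is covered by at least one \mytriple{}, every boundary quadrangle $\{a, c \mid b, d\}$ is generated either by the Case A identities of a \mytriple{} with outer pair $\{a,c\}$ or by the Case B identities of a \mytriple{} with outer pair $\{b,d\}$.

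Summing $O(|W|)$ extraction work and $O(|W|)$ emitted generators across all \mytriple{}s, and combining with the bi-gon and boundary-triangle enumerations above, yields a generating set $\candidateC$ for $\mathsf{B}_1$ of cardinality $O(\arbor(G)m)$ constructed in $O(\arbor(G)m)$ total time. The delicate step is really just verifying the two telescoping identities and checking that the two cases together exhaust all boundary quadrangles covered by a given \mytriple{}; everything else is straightforward bookkeeping around Proposition~\ref{prop:enumeration}.
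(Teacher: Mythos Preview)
Your proposal is correct and follows essentially the same argument as the paper: both use Proposition~\ref{prop:enumeration} to handle bigons and boundary triangles directly, then refine each \mytriple{} $(u,v,W)$ by the direction patterns of the $\{u,w\}$- and $\{v,w\}$-edges and emit $O(|W|)$ representative boundary quadrangles via the very same telescoping identities. Your Case~A collapses the paper's Type-1 and Type-2 lists into one (they differ only by swapping the roles of $u$ and $v$), and your Case~B is exactly the paper's Type-3 \myquad{} construction; the coverage argument and size bound are identical.
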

\begin{proof}
We will show that we can generate all bigons, 
boundary triangles and boundary quadrangles, via a generating set $\candidateC$ satisfying the requirement in the theorem. 

First, note that the total number of bigons are $O(m)$ since there are $m$ directed edges and for each edge, there is at most one bi-gon created as there is no multi-edge in $G$. 
Next, by Proposition~\ref{prop:enumeration}, we have that the total number of undirected triangles is $O(\arbor(G)m)$, which further bounds the total number of directed triangles, as well as that of the boundary triangles by $O(\arbor(G)m)$. In particular, we enumerate all $O(\arbor(G)m)$ undirected triangles in $G$ in $O(\arbor(G)m)$ time, and if the vertices form a boundary triangle, we add it to the generating set $\candidateC$. 
This adds all $O(\arbor(G)m)$ number of boundary triangles to $\candidateC$ in $O(\arbor(G)m)$ time. 

The case for quadrangles is more involved. The total number of quadrangles could be $\Theta(n^4)$
and we aim to find a subset of $O(\arbor(G)m)$ that generate all boundary quadrangles to add to $\candidateC$. 

By Proposition \ref{prop:enumeration}, in $O(\arbor(G)m)$ time, we can compute a list $L$ of \mytriple{}s with $O(\arbor(G)m)$ total size complexity that generate all \emph{undirected quadrangles}. 
Note that this also implies that $|L| = O(\arbor(G)m)$. 

Now for a \mytriple{} $\atriple_{uv} = (u, v, \{w_1, w_2,\cdots, w_l\})$, 
we will generate three lists as follows (see Figure \ref{fig:cases_quad}): 

\begin{figure}[H]
\includegraphics[width=10cm]{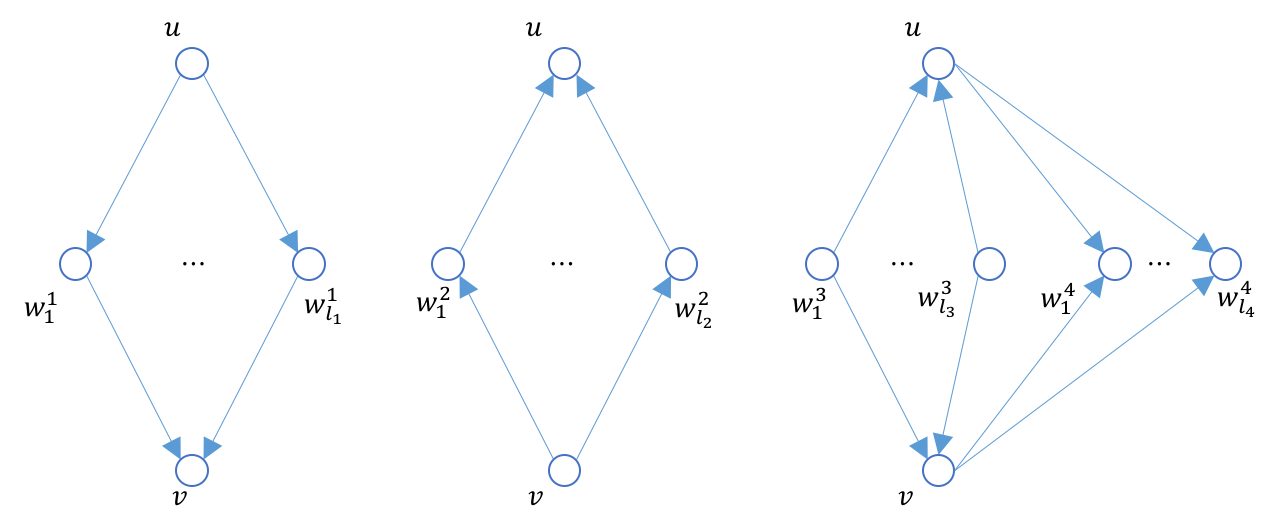}
\centering
\caption{Three lists extracted from $\atriple = (u, v, \{w_1, w_2,\cdots, w_l\})$}
\label{fig:cases_quad}
\end{figure}

\begin{description}
\item[Type-1 list:] a \mytriple{} $\atriple^{(1)}_{uv} = (u, v, \{w^1_1, w^1_2,\cdots, w^1_{l_1}\})$ where $w^1_i$ is in $\atriple^{(1)}_{uv}$ if and only if $w^1_i \in \atriple_{uv}$, and both edges $(u, w^1_i)$ and $(w^1_i, v)$ are in $E$. 
We say that a \emph{boundary} quadrangle can be \emph{generated by} the type-1 list $\atriple^{(1)}_{uv}$ if it is of the form $(u, v \mid w_i^1, w_j^1)$, with $i\neq j \in [1, l_1]$. 
\item[Type-2 list:] a \mytriple{} $\atriple^{(2)}_{uv} = (u, v, \{w^2_1, w^2_2,\cdots, w^2_{l_2}\})$ where $w^2_i$ is in $\atriple^{(2)}_{uv}$ if any only if $w^2_i$ is in $\atriple_{uv}$, and the edges $(v, w^2_i)$ and $(w^2_i, u)$ are in $E$. 
We say that a \emph{boundary} quadrangle can be \emph{generated by} the type-2 list $\atriple^{(2)}_{uv}$ if it is of the form $(v, u \mid w_i^2, w_j^2)$, with $i\neq j \in [1, l_2]$. 
\item[Type-3 list:] a so-called \emph{\myquad{}} $\atriple^{(3)}_{uv} = (u, v, \{w^3_1, w^3_2,\cdots, w^3_{l_3}\}, \{w^4_1, w^4_2,\cdots, w^4_{l_4}\})$, 
where each $w^3_i$ and $w^4_j$ are in $\atriple_{uv}$, and edges $(w^3_i, u)$, $(w^3_i, v)$, $(u, w^4_j)$ and $(v, w^4_j)$ are in $E$, for $1\le i\le l_3,1\le j\le l_4$. 
We say that a \emph{boundary} quadrangle can be \emph{generated by} the type-3 list $\atriple^{(3)}_{uv}$ if it is of the form $(w_i^3, w_j^4 \mid u, v)$, with $i \in [1, l_3]$ and $j\in [1, l_4]$. 
\end{description}

It is easy to see that $l_1+l_2+l_3+l_4=O(l)$. Let $\widehat L$ denote all lists generated by triples in $L$. Note that total size complexity of $\widehat L$ (which is the sum of the size of each \mytriple{} or \myquad{} in $\widehat L$) is still $O(\arbor(G)m)$.

Denote $\mathcal{R}$ as the set of all boundary quadrangles in $G$; note that by definition, each element in $\mathcal{R}$ is a 1-boundary. 
On the other hand, each boundary quadrangle in $\mathcal{R}$ is generated by some list in $\widehat L$. 
Indeed, given an arbitrary boundary quadrangle $(x,y \mid z, s)$, By Proposition \ref{prop:enumeration}, we know that its undirected version $R(x, z, y, s)$ must be covered by some \mytriple{} $\atriple_{uv} \in L$. If $x = u$ and $y = v$, then the boundary quadrangle is then generated by the type-1 list $\atriple^{(1)}_{uv}$. 
If $x=v$ and $y=u$, then it is generated by the type=2 list $\atriple^{(2)}_{uv}$. 
Otherwise, by the definition of \emph{$\atriple_{uv}$ covering the undirected quadrangle $R(x,z,y,s)$}, it must mean that $u = z, v = s$ or $u=s, v=z$. Thus this boundary quadrangle is generated by the type-3 list $\atriple^{(3)}_{uv}$. 
This proves that all boundary quadrangles in $\mathcal{R}$ are generated by the lists in the set $\widehat L$. 

Finally, we will add boundary quadrangles to $\candidateC$ as follows: 
We inspect each list in $\widehat L$. 
\begin{description}\denselist 
\item[Case 1]: 
If it is a type-1 list, say of the form $\atriple^{(1)}_{uv} = (u, v, \{w^1_1, w^1_2,\cdots, w^1_{l_1}\})$, we add $l_1-1$ number of boundary quadrangles of the form $(u, v \mid w_1^1, w_i^1)$ to $\candidateC$ for each $i\in [2, l_1]$. 
\item[Case 2]: If it is a type-2 list of the form $\atriple^{(2)}_{uv} = (u, v, \{w^2_1, w^2_2,\cdots, w^2_{l_2}\})$, we then add $l_2-1$ number of boundary quadrangles of the form $(v, u \mid w_1^2, w_i^2)$ to $\candidateC$, for each $i\in [2, l_2]$. 
\item[Case 3]: If it is a type-3 list of the form $\atriple^{(3)}_{uv} = (u, v, \{w^3_1, w^3_2,\cdots, w^3_{l_3}\}, \{w^4_1, w^4_2,\cdots, w^4_{l_4}\})$, then we add $l_3 + l_4$ number of boundary quadrangles of the form $(w_1^3, w_i^4\mid u, v)$ or $(w_j^3, w_1^4 \mid u, v)$ into $\candidateC$, for each $i\in [1, l_4]$ and $j \in [1, l_3]$. 
\end{description}
Note that for each case above, the subset of quadrangles we add to $\candidateC$ can generate all the boundary quadrangles generated by the corresponding list $\atriple^{(k)}_{uv}$ for $k = 1, 2,$ or $3$.  
Indeed, for (Case 1): each boundary quadrangle $(u, v \mid w_i^1, w_j^1)$ generated by $\atriple^{(1)}_{uv}$ can be written as the linear combination of $(u, v\mid w_1^1, w_j^1)$ and $(u, v\mid w_1^1, w_i^1)$, both of which are added to $\candidateC$. (Case 2) can be argued in a symmetric manner. 
For (Case 3), consider a boundary quadrangle $(w_j^3, w_i^4 \mid u, v)$ generated by $\atriple^{(3)}_{uv}$. It can be written as the combination of $(w_j^3, w_1^4 \mid u, v)$, $(w_1^3, w_1^4\mid u, v)$ and $(w_1^3, w_i^4 \mid u, v)$, all of which on the righthand side are added to $\candidateC$. 

Hence in summary, the set of quadrangles we add to $\candidateC$ will generate all boundary quadrangles $\mathcal{R}$. Furthermore, note that in each case, the number of quadrangles we add to $\candidateC$ is linear in the size of the list from $\widehat L$ being considered. 
Hence the total number of boundary quadrangles ever added to $\candidateC$ is bounded by the total size of $\widehat L$ which is further bounded by $O(\arbor(G)m)$. 
Putting everything together, the theorem then follows. 
\end{proof}

It then follows from Theorem \ref{thm:B1} that {\sf (Step 2.a)} can be implemented in $O(\arbor(G)m)$ time, producing a generating set of cardinality $O(\arbor(G)m)$. 
Finally, representing each boundary cycle in $\candidateC$ as a vector of dimension $m = |E|$, we can then compute the rank of cycles in $\candidateC$ in $O(|\candidateC| m^{\omega - 1}) = O(\arbor(G) m^\omega)$,
where $\omega< 2.373$ is the exponent for matrix multiplication~\cite{cheung2013fast}.

Putting everything together, we have that
\begin{theorem}\label{thm:slowAlg}
Given a directed graph $G = (V, E)$ with $n = |V|$ and $m = |E|$, Algorithm \ref{alg:slowAlg} computes the rank of the $1$-dimensional path homology group $\HH_1$ in $O(\arbor(G) m^\omega)$ time. 
The algorithm can be extended to compute a basis for $\HH_1$ with the same time complexity. 
\end{theorem}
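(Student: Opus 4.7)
My plan is to verify correctness first and then add up the running times of the three substeps, leveraging previously established results. Correctness of Algorithm~\ref{alg:slowAlg} is essentially immediate: by Corollary~\ref{cor:H1characterization} together with Theorem~\ref{thm:B1}, we have $\HH_1 = \mathsf{Z}_1 / \mathsf{B}_1$, hence $\mathrm{rank}(\HH_1) = \mathrm{rank}(\mathsf{Z}_1) - \mathrm{rank}(\mathsf{B}_1)$, which is exactly what the algorithm returns. The only thing that needs checking is that the cycles the algorithm collects in $\candidateC$ indeed generate $\mathsf{B}_1$, which is precisely the content of Theorem~\ref{thm:candidate1}.

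For the running time I would handle each substep in turn. Step 1 is immediate from Proposition~\ref{prop:Z1}: we report $|E|-|V|+1$ in $O(1)$ time (or $O(m)$ time if one explicitly constructs a spanning tree together with the fundamental cycles $\{c_e : e \in \bar T\}$). Step 2.a is handled by Theorem~\ref{thm:candidate1}, which produces in $O(\arbor(G)m)$ time a generating set $\candidateC$ of size $|\candidateC| = O(\arbor(G)m)$, each element being representable as an $m$-dimensional coefficient vector over $\mathbb{F}$.

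The step I would treat with the most care is Step 2.b, the rank computation. I would form the $|\candidateC| \times m$ matrix $M$ whose rows are the vectors of elements of $\candidateC$ and invoke a fast rank algorithm. Using the matrix-multiplication-based rank subroutine of~\cite{cheung2013fast}, the rank of a $p \times q$ matrix can be computed in $O(pq^{\omega-1})$ time (when $p \ge q$; otherwise by symmetry we swap roles). With $p = O(\arbor(G)m)$ and $q = m$, this yields $O(\arbor(G) m \cdot m^{\omega-1}) = O(\arbor(G)m^{\omega})$ time. Combining all three substeps, the dominant term is $O(\arbor(G)m^{\omega})$, as claimed.

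Finally, to extend the algorithm to output an explicit basis of $\HH_1$ within the same time bound, I would (i) first extract a maximal linearly independent subset $\mathcal{B}_B \subseteq \candidateC$ from the row-rank computation above (standard fast-rank algorithms can be augmented to return the pivot rows without affecting the asymptotic complexity), and (ii) compute a basis $\mathcal{B}_Z$ of $\mathsf{Z}_1$ via Proposition~\ref{prop:Z1}, which takes $O(mn) = O(m^2)$ time. Then a basis for $\HH_1 = \mathsf{Z}_1/\mathsf{B}_1$ is obtained by picking cycles of $\mathcal{B}_Z$ whose cosets in $\mathsf{Z}_1/\langle \mathcal{B}_B \rangle$ are independent, which reduces to one more rank/reduction step on a matrix with the same dimensions as before and hence stays within $O(\arbor(G)m^{\omega})$. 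The main technical obstacle is really just pinning down the right rank subroutine and verifying that it can also be used to certify which rows form a pivot set; the rest is an assembly of results already proved in the excerpt.
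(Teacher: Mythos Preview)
Your proposal is correct and follows essentially the same approach as the paper: invoke Theorem~\ref{thm:candidate1} for Step~2.a, then apply the fast rank algorithm of~\cite{cheung2013fast} to the resulting $O(\arbor(G)m)\times m$ matrix to get the $O(\arbor(G)m^{\omega})$ bound. Your treatment of the basis-output extension is more explicit than the paper's (which merely asserts it), but the underlying idea is the same.
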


For example, by Observation \ref{obs:arboricity}, if $G$ is a planar graph, then we can compute $\HH_1$ in $O(n^\omega)$. For a graph $G$ embedded on a genus $g$ surface, $\HH_1$ can be computed in $O(g n^\omega)$ time. 
In contrast, we note that the algorithm of \cite{chowdhury2018persistent} takes $O(n^5)$ time for planar graphs. 

\section{Computing persistent path homology \texorpdfstring{$\HH_1$}{Lg}}\label{sec:perH1}

The concept of arboricity used in the previous section does not consider edge directions. Indeed, our algorithm to compute a generating set $\candidateC$ as given in the proof of Theorem \ref{thm:candidate1} first computes a (succinct) representation of all quadrangles, whether they contribute to boundary quadrangles or not. 
On the other hand, as Figure~\ref{fig:nobndQ} illustrates, a graph $G$ can have no boundary quadrangle despite the fact that the graph is dense (with $\Theta(n^2)$ edges and thus $\arbor(G) = \Theta(n)$ arboricity). 
Another way to view this is that the example has no allowed $2$-path, and thus no $\partial$-invariant 2-paths and consequently no 1-boundary cycles. Our algorithm will be more efficient if it can also respect the number of allowed elementary $2$-paths. 

\begin{figure}[H]
\centering
\includegraphics[width=8cm]{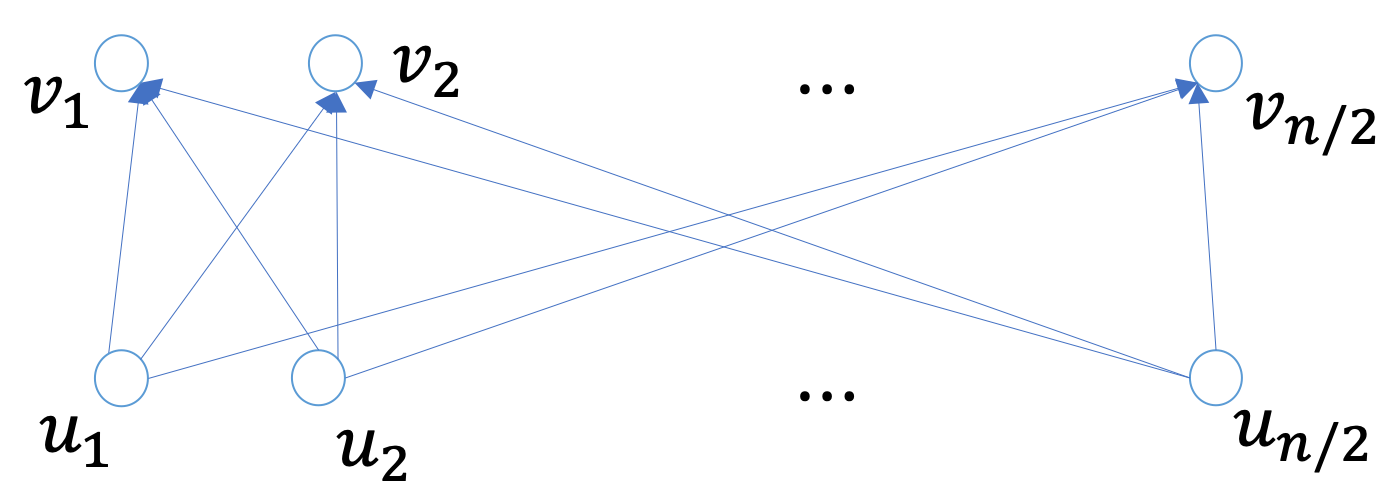}
\caption{A dense graph with no boundary quadrangle}
\label{fig:nobndQ}
\end{figure}

In fact, a more standard and natural way to compute a basis for the 1-boundary group proceeds by taking the boundary of $\partial$-invariant 2-paths. The complication is that unlike in the simplicial homology case, it is not immediately evident how to compute a basis for $\Omega_2$ (the space of $\partial$-invariant 2-paths). 
Nevertheless, Chowdhury and M\'{e}moli presented an elegant algorithm to show that a basis for $\mathsf{B}_1$ (and $\HH_1$) can still be computed using careful column-based matrix reductions~\cite{chowdhury2018persistent}. The time complexity of their algorithm is $O((\sum_{(u, v)\in E} (d_{in}(u)+d_{out}(v)))mn^2)$ which depends on the number of elementary $2$-paths \footnote{The time complexity  given in the paper~\cite{chowdhury2018persistent} assumes that the input directed graph is complete, and takes $O(n^9)$ to compute $\HH_1$. However, a more refined analysis of their time complexity shows that it can be improved to $O((\sum_{(u, v)\in E} d_{in}(u)+d_{out}(v))mn^2)$. }.

In this section, we present an algorithm that can take advantage of both of the previous approaches (the algorithm of \cite{chowdhury2018persistent} and Algorithm \ref{alg:slowAlg}). 
Similar to \cite{chowdhury2018persistent}, we will now consider the persistent path homology setting, where we will add directed edges in $G$ one by one incrementally. Hence our algorithm can compute the \emph{persistent} $\HH_1$ w.r.t. a filtration. 
However different from \cite{chowdhury2018persistent}, instead of reducing a matrix with columns corresponding to all elementary allowed 2-paths, we will follow a similar idea as in Algorithm \ref{alg:slowAlg} and add a generating set of boundary cycles each time we consider a new directed edge. 

\subsection{Persistent path homology}
We now introduce the definition of the persistent path homology~\cite{chowdhury2018persistent}. 
The \emph{persistent vector space} is a family of vector spaces together with linear maps $\{U^{\delta}\xrightarrow{\mu_{\delta, \delta'}}U^{\delta'}_{\delta \le \delta'\in \mathbb{R}}\}$ so that: (1) $\mu_{\delta, \delta}$ is the identity for every $\delta\in \mathbb{R}$; and (2) $\mu_{\delta, \delta''} = \mu_{\delta, \delta'}\circ \mu_{\delta', \delta''}$ for each $\delta\le \delta'\le \delta''\in \mathbb{R}$.

Let $G=(V, E, \weightf)$ be a weighted directed graph where $V$ is the vertex set, $E$ is the edge set, and $\weightf$ is the weight function $\weightf:E\to \mathbb{R}^+$. 
For every $\delta\in \mathbb{R}^+$, a directed graph $G^{\delta}$ can be constructed as $G^{\delta}=(V^{\delta}=V, E^{\delta}=\{e\in E: \weightf(e)\le \delta\})$. 
This gives rise to a filtration of graphs  $\{G^{\delta}\xhookrightarrow{} G^{\delta'}\}_{\delta\le\delta'\in \mathbb{R}}$ using the natural inclusion map $i_{\delta, \delta'}: G^{\delta}\xhookrightarrow{} G^{\delta'}$.
\begin{definition}\cite{chowdhury2018persistent}
The $1$-dimensional persistent path homology of a weighted directed graph $G=(V, E, \weightf)$ is defined as the persistent vector space 
$\permodule_1 :=\{\HH_1(G^{\delta})\xrightarrow{i_{\delta, \delta'}}\HH_1(G^{\delta'})\}_{\delta \le \delta'\in \mathbb{R}}.$ 
The $1$-dimensional path persistence diagram $Dg(G)$ of $G$ is the persistence diagram of $\permodule_1$.
\end{definition}

To compute the path homology $\HH_1(G)$ of an unweighted directed graph $G=(V,E)$, we can order edges in $E$ arbitrarily with the index of an edge in this order being its weight. The rank of $\HH_1(G)$ can then be retrieved from the $1$-dimensional persistent homology group induced by this filtration by considering only those homology classes that ``never die''.

\subsection{A more efficient algorithm}
\label{subsec:persistH1}
In what follows, to simplify presentation, we assume that we are given a directed graph $G = (V, E)$, where edges are already sorted $e_1, \ldots, e_m$ in increasing order of their weights. Let $G^{(i)} = (V, E^{(i)} = \{e_1, \ldots, e_i\})$ denote the subgraph of $G$ spanned by the edges $e_1, \ldots, e_i$; and set $G^{(0)} = (V, \varnothing)$. 
We now present an algorithm to compute the $1$-dimensional persistent path homology induced by the nesting sequence $G^{(0)} \subseteq G^{(1)} \subseteq \cdots G^{(m)}$. 
In particular, in Algorithm \ref{alg:fastAlg}, as we insert each new edge $e_s$, moving from $G^{(s-1)}$ to $G^{(s)}$, we maintain a basis for $\mathsf{Z}_1 (s):= \mathsf{Z}_1(G^{(s)})$ and for $\mathsf{B}_1 (s):= \mathsf{B}_1(G^{(s)})$, updated from $\mathsf{Z}_1(s-1)$ and $\mathsf{B}_1(s-1)$ and output new persistent pairs. On a high level, this algorithm follows the standard procedure in~\cite{cohen2006vines}.

\begin{algorithm}[H]
\caption{Compute $1$-D persistent path homology for a directed graph $G = (V, E)$}
\label{alg:fastAlg}
 \begin{algorithmic}[1] 
 	\Procedure{Persistence}{$G$}
 	\State Order the edges in non-decreasing order of their weights: $e_1, \ldots, e_m$. 
    \State Set $G^{(0)}=(V, \varnothing)$, current basis  for 1-boundary group is $B=\varnothing$. 
	\For{$s = 1$ to $m$}
	    \State Call \Call{GenSet}{$s$} to compute a generating set $\candidateC_s$ containing a basis for newly generated 1-boundary cycles moving from $G^{(s-1)}$ to $G^{(s)}$.
        \State Call \Call{FindPairs}{$s$} to output new persistent pairs, and update the boundary basis $B$ for $G^{(s)}$.
      \EndFor
	\EndProcedure
\end{algorithmic}
\end{algorithm}

\subsubsection{Procedure \texorpdfstring{\sc{GenSet}}{Lg}.} 
Note that $G^{(s)}$ is obtained from $G^{(s-1)}$ by inserting a new edge $e_s = (u, v)$ to $G^{(s-1)}$. 
At this point, we have already maintained a basis $B$ for $\mathsf{B}_1(\oldG)$. Our goal is to compute a set of generating boundary cycles $\candidateC_s$ such that $B \cup \candidateC_s$ contains a basis for $\mathsf{B}_1(\newG)$. 

We first inspect the effect of adding edge $e_s = (u, v)$ to $\newG$. 
Two cases can happen: 

(Case-A): The endpoints $u$ and $v$ are in different connected components in (the undirected version of) $\oldG$, and after adding $e_s$, those two components are merged into a single one in $\newG$. 
In this case, no cycle is created, nor does the boundary group change. Thus $\mathsf{Z}_1(\oldG) = \mathsf{Z}_1(\newG)$ and $\mathsf{B}_1(\oldG) = \mathsf{B}_1(\newG)$. 
We say that edge $e_s$ is \emph{negative} in this case (as it kills in $\HH_0$). 

The algorithm maintains the set of negative edges seen so far, which is known to form a spanning forest ${T}_s$ of $V$. (Here, we abuse the notation slightly and say that a set of directed edges span a tree for a set of vertices if they do so when directions are ignored.)
The algorithm maintains ${T}_s$ via a union-find data structure. 

\begin{figure}[H]
\centering
\includegraphics[width=3.5cm]{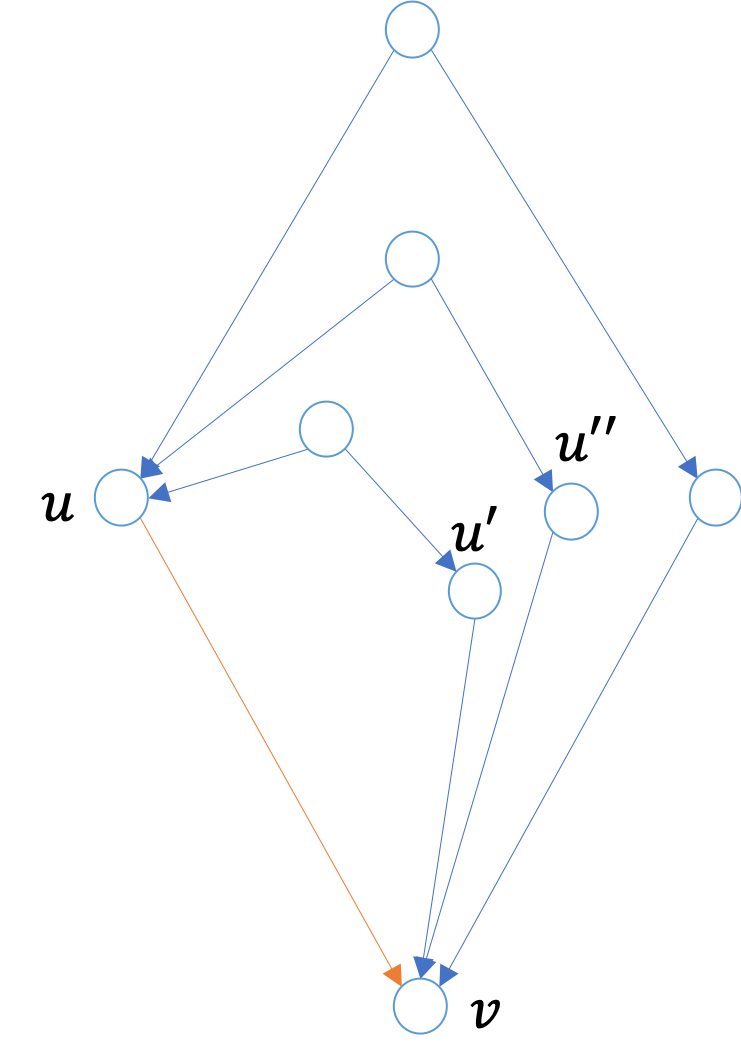}
\caption{The insertion of edge $(u, v)$ increases the rank of the boundary group by 3.}
\label{fig:negative}
\end{figure}

(Case-B): The endpoints $u$ and $v$ are already in the same connected component in $\oldG$. After adding this edge $e_s$, new cycles are created in $\newG$. Hence $e_s$ is \emph{positive} in this case (as it creates an element in $\mathsf{Z}_1$; although different from the standard simplicial homology, it may not necessarily create an element in $\HH_1$ as we will see later). 

Whether $e_s$ is positive or negative can be easily determined by performing two {\sf Find} operations in the union-find data structure representing ${T}_{s-1}$. A {\sf Union}($u,v$) operation is performed to update ${T}_{s-1}$ to ${T}_s$ if $e_s$ is negative. 

We now describe how to handle (Case-B). 
After adding edge $e_s$, multiple cycles containing $e_s$ can be created in $\newG$. Nevertheless, by Proposition \ref{prop:Z1}, the dimension of $\mathsf{Z}_1$ increases only by $1$. 
On the other hand, the addition of $e_s$ may create new boundary cycles. Interestingly, it could increase the rank of $\mathsf{B}_1$ by more than $1$. See Figure \ref{fig:negative} for an example where $rank(\mathsf{B}_1)$ increases by $3$; and note that this number can be made arbitrarily large.

As mentioned earlier, in this case, we wish to compute a set of generating boundary cycles $\candidateC_s$ such that $B \cup \candidateC_s$ contains a basis for $\mathsf{B}_1(\newG)$. 

Similar to Algorithm \ref{alg:slowAlg}, using Theorem \ref{thm:B1}, we choose some bigons, boundary triangles and boundary quadrangles and add them to $\candidateC_s$. 
In particular, since $\candidateC_s$ only accounts for the newly created boundary cycles, we only need to consider bigons, boundary triangles and boundary quadrangles that contain $e_s$. 
We now describe the construction of $\candidateC_s$, which is initialized to be $\varnothing$. 

(i) \emph{Bigons}. At most one bigon can be created after adding $e_s$ (namely, the one that contains $e_s$). We add it to $\candidateC_s$ if this bigon exists. 

\begin{figure}[H]
\centering
 \begin{subfigure}[t]{0.3\textwidth}
        \centering
       \includegraphics[width=3.5cm]{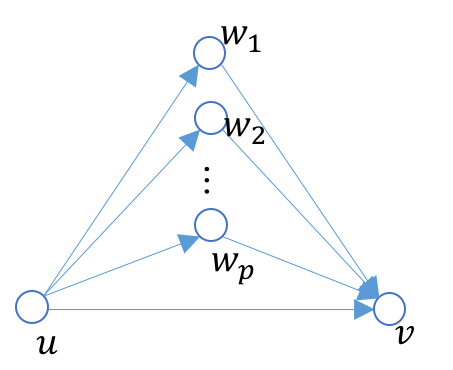}
        \subcaption{}
    \end{subfigure}
    ~ 
    \begin{subfigure}[t]{0.3\textwidth}
        \centering
       \includegraphics[width=3cm]{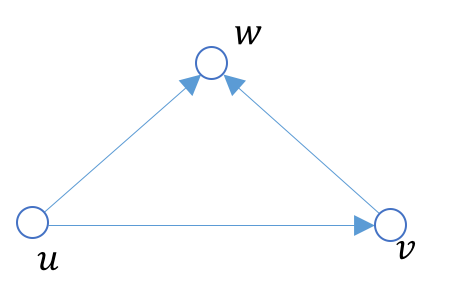}
        \subcaption{}
    \end{subfigure}
    ~ 
    \begin{subfigure}[t]{0.3\textwidth}
        \centering
       \includegraphics[width=3cm]{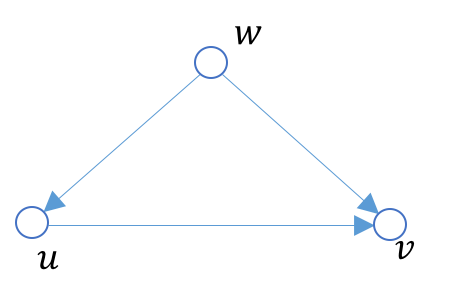}
        \subcaption{}
    \end{subfigure}
\caption{Three types of boundary triangles incident to $e_s = (u, v)$.}
\label{fig:persis_tri}
\end{figure}

(ii) \emph{Boundary triangles}. There could be three types of newly created boundary triangles containing $e_s = (u, v)$. 
The first case is when $u$ is the source and $v$ is the sink; see Figure~\ref{fig:persis_tri}(a). In this case multiple 2-paths may exist from $u$ to $v$, $e_{uw_1v}, e_{uw_2v},\cdots e_{uw_pv}$, forming multiple boundary triangles of this type containing $e_s$. However, we only need to add \emph{one} triangle of them into $\candidateC_s$, say $(u, v\mid w_1)$ since every other triangle $(u, v\mid w_j)$ can be written as a linear combination of $(u,v\mid  w_1)$ and an existing boundary quadrangle $(u, v\mid w_1, w_j)$ in $\oldG$. 

For the second case (see Figure~\ref{fig:persis_tri}(b)) where $u$ is the source but $v$ is not the sink, we include all such boundary triangles to $\candidateC_s$. 
We also add all boundary triangles of the last type in which $v$ is the sink but $u$ is not the source to $\candidateC_s$; see Figure~\ref{fig:persis_tri}(c). 
It is easy to see that $\candidateC_s \cup B$ can generate all new boundary triangles containing $e_s = (u, v)$. 

\begin{figure}[H]
\centering
\begin{subfigure}[t]{0.47\textwidth}
        \centering
       \includegraphics[width=5cm]{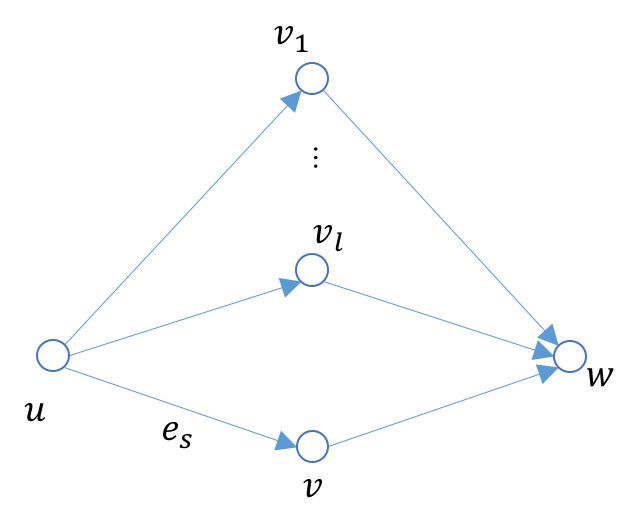}
        \subcaption{}
    \end{subfigure}
    ~ 
    \begin{subfigure}[t]{0.47\textwidth}
        \centering
       \includegraphics[width=5cm]{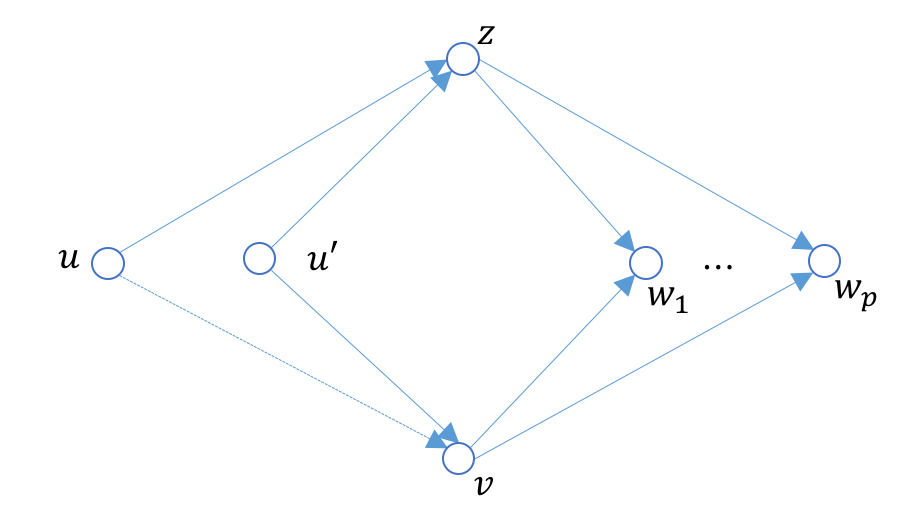}
        \subcaption{}
    \end{subfigure}

\caption{ (a) Examples of new  boundary quadrangles with $u$ being the source. (b) Not all boundary quadrangles in $M$ will be added to the generating set $\candidateC_s$. 
\label{fig:pers_quad}}
\end{figure}

(iii) \emph{Boundary quadrangles.} 
Given an edge $e_s=(u, v)$, there are two types of the boundary quadrangles incident to it: one has $u$ as the source; the other has $v$ as the sink. We focus on the first case; see Figure~\ref{fig:pers_quad}(a). The second case can be handled symmetrically.

In particular, we will first compute a set $M$ and then select a subset of quadrangles from $M$ for adding to $\candidateC_s$.

Specifically, take any successor $w$ of $v$, that is, there is an edge $(v, w) \in \oldG$ forming an allowed 2-path $e_{uvw}$ in $\newG$. Before introducing the edge $e_s$, there may be multiple allowed 2-paths $e_{uv_1w}, e_{uv_2w},\cdots, e_{uv_lw}$ in $\oldG$; see Figure \ref{fig:pers_quad} (a). 
For each such 2-path $e_{uv_kw}, 1\le k\le l$, a new boundary quadrangle $(u, w \mid v, v_k)$ containing $e_s = (u,v)$ will be created. 
However, among all such 2-paths $e_{uv_1w}, \cdots, e_{uv_lw}$, we will pick just one 2-path, say $e_{uv_1w}$
and only add the quadrangle $(u, w \mid v, v_1)$ formed by $e_{uvw}$ and $e_{uv_1w}$ to $M$. 
Observe that any other boundary quadrangle containing 2-path $e_{uvw}$, say $(u, w \mid v, v_k)$, can be written as a linear combination of the quadrangle $(u, w \mid v, v_1)$ and boundary quadrangle $(u, w \mid v_k, v_1)$ which is already in $\oldG$ (and in the span of $B$ which is a basis for $\mathsf{B}_1(\oldG)$). In other words, $(u, w \mid v, v_1) \cup B$ generates any other boundary quadrangle containing 2-path $e_{uvw}$. 

We perform this for each successor $w$ of $v$.  
Hence this step adds at most $d^{\oldG}_{out}(v)$ number of boundary quadrangles to the set $M$. 

Not all quadrangles in $M$ will be added to $\candidateC_s$. 
In particular, suppose we have $p$ quadrangles $A = \{(u, w_j \mid v, z):1\le j\le p\} \subseteq M$ incident to the newly inserted edge $e_s=(u, v)$ as well as another vertex $z$, i.e. there are edges $(u, z), (w_j, z)$ and $(v, w_j)$, $1\le j\le p$; see Figure~\ref{fig:pers_quad} (b). 
If there does not exist any other vertex $u'$ such that edges $(u', z), (u', v) \in \oldG$, then we add \emph{all} quadrangles in $A$ to $\candidateC_s$. If this is not the case, let $u'$ be another vertex such that $(u', z)$ and $(u', v)$ are already in $\oldG$; see Figure \ref{fig:pers_quad} (b). 
In this case, we only add \emph{one} quadrangle from set $A$, say, $(u, w_1|v, z)$ to the generating set $\candidateC_s$. 

It is easy to check that any other quadrangle $(u, w_j\mid v, z)$, $1 < j \le p$, can be written as the combination of $(u, w_1\mid v, z)$, $(u', w_1\mid v, z)$ and $(u', w_j\mid v,  z)$. As the latter two quadrangles are boundary quadrangles from $\oldG$, they can already be generated by $B$. 
The entire process takes time $O(|M|) = O(d^{\oldG}_{out}(v))$. 
It is also easy to see that $B \cup \candidateC_s$ can generate any boundary quadrangle containing $e_s=(u,v)$ and with $u$ being its source. 

The case when $v$ is the sink of a boundary quadrangle is handled symmetrically in time $O(d^{\oldG}_{in}(u))$. 
Hence the total time to compute a generating set $\candidateC_s$ is $O(d^{\oldG}_{in}(u) + d^{\newG}_{out}(v))$ when inserting a single edge $e_s=(u,v)$.

\subsubsection{Procedure \texorpdfstring{{\sc{FindPairs}}}{Lg}} 
Given the generating set $\candidateC_s$ and the previous basis $B$ for $\mathsf{B}_1(\oldG))$, we know that $\candidateC_s \cup B$ generate the new boundary group $\mathsf{B}_1(\newG)$. 
We now extract a basis $B_{new}$ for $\mathsf{B}_1(\newG)$ from $B \cup \candidateC_s$. 

We represent each 1-cycle $\gamma$ by an $m$-dimensional vector, also denoted by $\gamma$, so that $\gamma  = \sum_{i=1}^m \gamma[i] e_i$. (Note that $e_1, \ldots, e_m$ are sorted according to their filtration order.) 
A set of $k$ cycles can now be viewed as a $m\times k$ matrix, where the $i$-th column corresponds to the vector representation of the $i$-th cycle. 

Thus columns in the matrix $B$ correspond to cycles in an existing basis for $\mathsf{B}_1(\oldG)$), and are already linearly independent. 
Our goal is now to compute a basis of the form $[B \mid B']$ for the matrix $[B \mid \candidateC_s]$, and the columns in $B$ and $B'$ form a new basis $B_{new}$ for $\mathsf{B}_1(\newG)$. 

To do so, we follow the standard persistence algorithm which would also output persistence-pairings for $\HH_1$. 
Specifically, Let $low(j)$ be the row index of the last non-zero entry in column $j$ in a matrix $A$. A matrix $A$ is in \emph{reduced form} if the $low(j)$ in each column $j$ is unique. 
We compute persistent pairs by always maintaining the basis in reduced form \cite{cohen2006vines}. Here assume that $B$ is already in reduced form. 
We then perform standard column reduction to convert $[B |\candidateC_s]$ into reduced form $[B | R]$. 
For each non-zero column $R[j]$ in $R$, let $k=low(j)$ be the index of its lowest non-zero entry. Let $e_k$ be the edge corresponding to this entry in the cycle corresponding to column $R[j]$. This means that the cycle corresponding to $R[j]$ is created when $e_k$ introduced, but killed when introducing $e_s$, since currently it is a boundary. Then we add the persistence pairing $(\weightf(e_k), \weightf(e_s))$ to the output persistence diagram $Dg_1 G$ for the $1$-dimensional path homology. 

The collection of non-zero columns in $R$ gives rise to $B'$. Afterwards, we update $B$ to be $B \cup B'$, and proceed to process the next edge $e_{s+1}$.

\subsection{Analysis of Algorithm \ref{alg:fastAlg}}
\label{subsec:persistH1analysis}

\paragraph{Correctness.} 
Notice that the invariant that $B$ is a basis for $\newG$ at the end of the for-loop (line-7 of Algorithm \ref{alg:fastAlg}) is maintained. 
Furthermore, $B$ is always in reduced form which is maintained via left-to-right column additions only. Hence the algorithm computes the $1$-dimensional persistent path homology correctly \cite{cohen2006vines}.

\paragraph{Time complexity analysis.} 
The remainder of this section is devoted to determining the time complexity of Algorithm \ref{alg:fastAlg}. 
Specifically, we first show the following theorem. 
\begin{theorem}\label{thm:candidateSize}
Across all stages $s \in [1, m]$, the total cardinality of the generating set $\mathsf{C}=\cup_s \candidateC_s$ is $O(\csize)$. The total time taken by procedure {\sc NewBasis}($s$) for all $s\in [1, m]$ is $O(m + \sum_{(u, v)\in E}(d_{in}(u)+d_{out}(v))\})$. 
\end{theorem}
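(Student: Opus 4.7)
The plan is to bound both $|\mathsf{C}|$ and the total runtime of \textsc{GenSet} by charging every action taken on behalf of $\candidateC_s$ to structures local to the inserted edge $e_s=(u,v)$, and then summing over $s\in[1,m]$.

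First I would walk through the per-step cost of \textsc{GenSet} for a single $e_s=(u,v)$, following the case analysis in Section~\ref{subsec:persistH1}. For bigons there is at most one candidate, checkable in $O(1)$ with an edge hash. For the three kinds of boundary triangles incident to $e_s$: type-(a) contributes at most one triangle, while types (b) and (c) iterate over successors of $v$ and predecessors of $u$ in $\oldG$, giving $O(d^{\oldG}_{in}(u)+d^{\oldG}_{out}(v))$ additions in the same time using adjacency tables. For boundary quadrangles, the source case builds $M$ by looping through the at most $d^{\oldG}_{out}(v)$ successors $w$ of $v$, doing $O(1)$ work per successor to locate some valid $v_1$ with $(u,v_1),(v_1,w)\in\oldG$; the $z$-grouping second filter can then be implemented in $O(|M|)$ time by hashing by $z$ and testing existence of a witness $u'$ in $O(1)$. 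The sink case is symmetric. Thus $|\candidateC_s|=O(d^{\oldG}_{in}(u)+d^{\oldG}_{out}(v))$ and \textsc{GenSet} runs in the same time.

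Summing over $s$ and replacing $d^{\oldG}_{\cdot}$ by its value in $G$ immediately yields the runtime bound $O(m+\sum_{(u,v)\in E}(d_{in}(u)+d_{out}(v)))$ (the additive $O(m)$ absorbs constant-per-edge bookkeeping such as the positive/negative test) and also gives one half of the cardinality bound, $|\mathsf{C}|=O(\sum_{(u,v)\in E}(d_{in}(u)+d_{out}(v)))$.

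The hard part is the other half, $|\mathsf{C}|=O(\arbor(G)m)$, because the per-edge bound summed naively yields $\sum_v d_{in}(v)d_{out}(v)$, which in general exceeds $\arbor(G)m$. Bigons contribute $O(m)$ trivially, and boundary triangles contribute $O(\arbor(G)m)$ directly from Proposition~\ref{prop:enumeration}(1) since each boundary triangle added corresponds to a distinct triangle of $G$. The crux is thus to bound the added boundary quadrangles. The idea I would use is to charge each quadrangle $(u,w\mid v,v_1)\in\mathsf{C}$ to the Chiba–Nishizeki \mytriple{} on its opposite-vertex pair $(u,w)$ in the $O(\arbor(G)m)$-size list of Proposition~\ref{prop:enumeration}(2), and show that each \mytriple{} receives only $O(\text{its size})$ charges. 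The $z$-grouping filter is essential here: for each fixed pair $(v,z)$, the ``add-all'' branch can fire at most once across the entire filtration, namely the unique instant when $u$ first becomes a common predecessor of $v$ and $z$, while every other invocation falls into the ``add-one'' branch. This structural lemma, together with the charge-per-triple accounting, is precisely where the main obstacle lies; once established, combining with the triangle and bigon bounds yields $|\mathsf{C}|=O(\arbor(G)m)$ and hence the stated $O(\csize)$ cardinality.
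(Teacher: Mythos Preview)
Your plan is essentially the same as the paper's, and the per-edge analysis and the $\sum_{(u,v)}(d_{in}(u)+d_{out}(v))$ half of the cardinality bound are handled correctly. The bigon and triangle parts of the $\arbor(G)m$ bound are also fine.

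The gap is in the quadrangle part of the $\arbor(G)m$ bound. You propose to charge the quadrangle $(u,w\mid v,v_1)$ to the Chiba--Nishizeki \mytriple{} on the pair $(u,w)$, but Proposition~\ref{prop:enumeration} does \emph{not} guarantee that such a \mytriple{} exists: it only guarantees that the undirected quadrangle $R(u,v,w,v_1)$ is covered by \emph{some} \mytriple{} in $L$, which could just as well be indexed by the other diagonal $(v,v_1)$. When that happens, your charging scheme has no target. The paper deals with this by doing a three-way case split on which \mytriple{} covers $R$: if it is $\atriple_{uw}$ or $\atriple_{wu}$ (your Cases a/b), the quadrangle is mapped to the slot $v$ in the corresponding type-1 or type-2 list, and injectivity comes from the first filter (only one quadrangle per $2$-path $e_{uvw}$ is ever added). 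If instead the covering list is $\atriple_{vz}$ with $z=v_1$ (Case c), the quadrangle must be mapped to the slot $u$ or the slot $w$ in the type-3 list $\atriple^{(3)}_{vz}$, and one must argue that at least one of these two slots is still free; this step uses your $z$-grouping observation \emph{and} the first filter simultaneously.

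Your structural lemma (``add-all fires at most once per $(v,z)$'') is correct and is one of the ingredients, but by itself it does not yield the bound: an add-all event can dump $\Theta(d_{out}(v))$ quadrangles, and summing that over all $(v,z)$ pairs does not obviously collapse to $O(\arbor(G)m)$. The paper's injective map $\pi:\mathcal{R}\to\mathsf{P}$ into vertex-slots of the refined list $\widehat L$ is precisely the accounting device that turns both filters into a global $O(\arbor(G)m)$ bound, and the Case-c argument is where the real work happens. Once you add that case analysis, your proof will match the paper's.
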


\begin{proof}
We will count separately the number of bigons, boundary triangles, and boundary quadrangles added to any $\candidateC_s$. 
Set $r = \csize$.

(i) \emph{Bigons:} First, it is easy to see that for each edge $e_s=(u, v)$ with $s\in [1, m]$, at most one bigon (incident to $e_s$) is added. Besides, if $\dout(v)=0$, there is no bigon incident to $e_s$.
Hence the total number ever added to $\mathsf{C}$ is $O(\min\{m, \sum_{(u, v)\in E}(\dout(v))\})=O(r)$ and it takes $O(m$) time to compute them.

(ii) \emph{Boundary triangles:} For boundary triangles, we know from Proposition \ref{prop:enumeration} that there are altogether $O(\arbor(G)m)$ triangles (thus at most $O(\arbor(G)m)$ boundary triangles) in a graph $G$ and they can all be enumerated in $O(\arbor(G)m)$ time. 
Obviously, the number of boundary triangles ever added to $\mathsf{C}$ is at most $O(\arbor(G)m)$.

We now argue that the number of boundary triangles added to $\mathsf{C}$ is also bounded by $O($ $\sum_{(u,v)\in E} (d_{in}(u) + d_{out}(v)))$. Note that, for every 2-path, at most one boundary triangle is added to the set. Since the number of 2-paths is indeed $\Theta(\sum_{(u,v)\in E} (d_{in}(u) + d_{out}(v)))$, the number of triangles we add is $O(\sum_{(u,v)\in E} (d_{in}(u) + d_{out}(v)))$.
Recall that there are three cases for boundary triangles added; see Figure~\ref{fig:persis_tri}. The time spent for the first case for every $s$ is $O(1)$by recording any 2-path $e_{uwv}$, and $O(d_{in}(u) + d_{out}(v))$ for the last two cases. Thus the total time spent at adding boundary triangles incident to $e_s$ and identifying triangles to be added to $\candidateC_s$ for all $s\in [1, m]$ takes $O(m+\sum_{(u,v)\in E} (d_{in}(u) + d_{out}(v)))$ time. 

(iii) \emph{Boundary quadrangles:} 
The situation here is somewhat opposite to that of the boundary triangles: Specifically, it is easy to see that this step accesses at most $O(d_{in}(u) + d_{out}(v))$ boundary quadrangles when handling edge $e_s = (u, v)$. Hence the number of boundary quadrangles it can add to $\candidateC_s$ is at most $O(d_{in}(u) + d_{out}(v))$. The total number of boundary quadrangles ever added to $\mathsf{C}$ is thus bounded by $O(\sum_{(u,v)\in E} (d_{in}(u) + d_{out}(v)))$. 

We now prove that the number of boundary quadrangles ever added to $\mathsf{C}$ is also bounded by $O(\arbor(G)m)$. 
We use the existence of a succinct representation of all quadrangles as specified in Proposition \ref{prop:enumeration} to help us argue this upper bound. Notice that our algorithm {\bf does not} compute this representation. It is only used to provide this complexity analysis. 

Specifically, by Proposition \ref{prop:enumeration}, we can compute a list $L$ of \mytriple{}s with $O(\arbor(G)m)$ total size complexity, which generates all undirected quadrangles. 
Following the proof of Theorem \ref{thm:candidate1}, we can further refine this list, where each \mytriple{} $\atriple\in L$ further gives rise to three lists that are of type-1, 2, or 3. 
Let $\widehat L$ denote this refinement of $L$, consisting of lists of type-1, 2 or 3. 
From the proof of Theorem \ref{thm:candidate1}, we know that the total size complexity for all lists in $\widehat L$ is still $O(\arbor(G)m)$. This also implies that the cardinality of $\widehat{L}$ is bounded by $|\widehat{L}| = O(\arbor(G)m)$. 

We now denote by $\mathcal{R}$ the set of all boundary quadrangles ever added to $\candidateC = \cup_s \candidateC_s$ by Algorithm \ref{alg:fastAlg}. 
Furthermore, let 
\[\mathsf{P} := \{ (\atriple, w) \mid \atriple \in \widehat{L}, w \in \atriple \}. \]
Below we show that we can find an injective map $\pi: \mathcal{R} \to \mathsf{P}$. 
But first, note that $|\mathsf{P}|$ is proportional to the total size complexity of $\widehat L$ and thus is bounded by $O(\arbor(G) m)$. 

We now establish the injective map $\pi: \mathcal{R} \to \mathsf{P}$. 
Specifically, we process each boundary quadrangle \emph{in the order} that they are added to $\candidateC$. 
Consider a boundary quadrangle $R = R(u, v, w, z)$ added to $\candidateC_s$ while processing edge $e_s = (u, v)$. 
There are two cases: The first is that $R$ is of the form $(u, w \mid v, z)$ in which $u$ is the source of this quadrangle. The second is that it has the form $(w, v \mid u, z)$ in which $v$ is the sink. 
We describe the map $\pi(R)$ for the first case, and the second one can be analyzed symmetrically. 

By construction of $L$, there is at least one \mytriple{} $\atriple \in L$ covering $R= (u, w \mid v, z)$. There are three possibilities: 

(Case-a): The \mytriple{} $\atriple$ is of the form $\atriple = \atriple_{uw} = (u, w, \{ \cdots \}).$ In this case, the boundary quadrangle $R = (u, w \mid v, z)$ is in a type-1 list $\atriple^{(1)}_{uw} = (u, w, S) \in \widehat{L}$, and both $v, z \in S$.
We now claim that the pair $(\atriple^{(1)}_{uw}, v) \in \mathsf{P}$ has not yet been mapped (i.e, there is no $R' \in C$ with $\pi(R') = (\atriple^{(1)}_{uw}, v)$ yet), and we can thus set $\pi(R) = (\atriple^{(1)}_{uw}, v) \in \mathsf{P}$. 
Suppose on the contrary there already exists $R' \in C$ that we processed earlier than $R$ with $\pi(R') = (\atriple^{(1)}_{uw}, v)$. In that case, $R' = (u, w \mid v, z')$ must contain the 2-path $e_{uvw}$ as well. Since $R'$ is processed earlier than $R$, and edge $e_s = (u, v)$ is the most recent edge added, $R'$ must be added when we process $e_s$ as well (as $R'$ contains $e_s$). 
However, Algorithm \ref{alg:fastAlg} in this case only adds \emph{one} quadrangle containing the 2-path $e_{uvw}$, meaning that $R'$ cannot exist (as otherwise, we would not have added $R$ to $\candidateC_s$; recall Figure \ref{fig:pers_quad} (a)). 
Hence, the map $\pi$ so far remains injective. 

(Case-b): The \mytriple{} $\atriple$ is of the form $\atriple = \atriple_{wu} = (w, u, \{ \cdots \}).$
In this case, this quadrangle is covered by the type-2 list $\atriple^{(2)}_{wu} \in \widehat{L}$. We handle this in a manner symmetric to (Case-a) and map $\pi(R) = (\atriple^{(2)}_{wu}, v)$. 

(Case-c): The last case is that $R$ is generated by \mytriple{} $\atriple$ of the form $\atriple_{vz} = (v, z, \{\cdots \})$. In this case, the quadrangle $R = (u,w \mid v, z)$ will be covered by the type-3 list $\atriple^{(3)}_{vz} = (v, z, S_1, S_2)$ with $u \in S_1$ and $w\in S_2$; see Figure \ref{fig:prof_map}. 
We now argue that at least one of $(\atriple^{(3)}_{vz}, u)$ and $(\atriple^{(3)}_{vz}, w)$ has \emph{not} been mapped under $\pi$ yet. 
\begin{figure}[H]
\includegraphics[width=8cm]{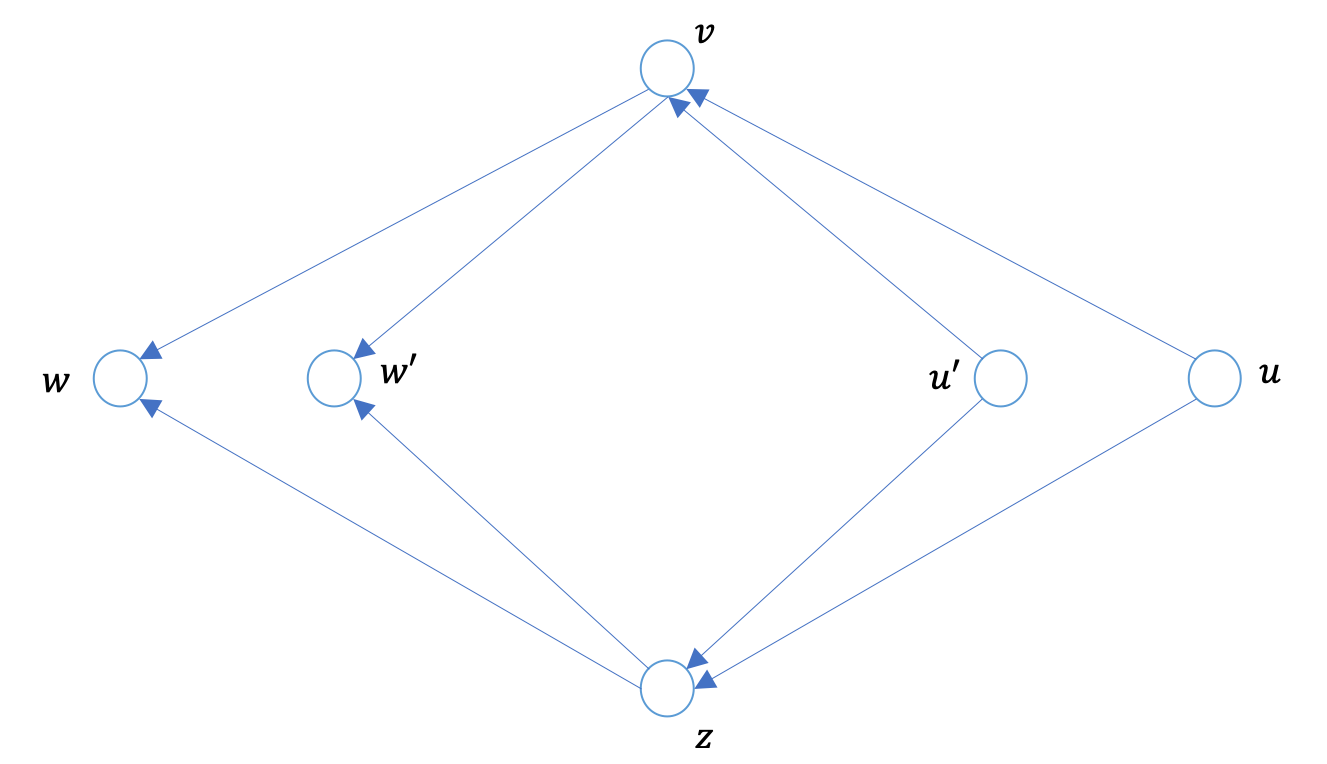}
\centering
\caption{At least one of $(\xi^{(3)}_{vz}, u)$ and $(\xi^{(3)}_{vz}, w)$ has not been mapped yet.}
\label{fig:prof_map}
\end{figure}

Suppose this is not the case and we already have both $\pi(Q_1) = (\atriple^{(3)}_{vz}, u)$ and $\pi(Q_2) = (\atriple^{(3)}_{vz}, w)$. Then $Q_1$ is necessarily of the form $(u, w' \mid v, z)$ and $Q_2$ is of the form $(u', w \mid v, z)$; and both $Q_1$ and $Q_2$ are processed before $R$. See Figure \ref{fig:prof_map}. 
Furthermore, $Q_1$ is only added when we process edge $e_s$. 
However, in this case, once $Q_1$ is added, Algorithm \ref{alg:fastAlg} will not add further quadrangle containing edges $(u, v)$ and $(u,z)$ (recall the handling of Figure \ref{fig:pers_quad} (b)). 
Hence $R$ cannot be added to $\candidateC_s$ in this case. 

In other words, it cannot be that both $Q_1$ and $Q_2$ already exist, and hence we can set $\pi(R)$ to be one of $(\atriple^{(3)}_{vz}, u)$ and $(\atriple^{(3)}_{vz}, w)$ that is not yet mapped. 
Consequently, the map $\pi$ we construct remains injective. 

We process all quadrangles in $\candidateC$ in order. The final $\pi: \mathcal{R} \to \mathsf{P}$ is injective, meaning that $|\mathcal{R}| \le |\mathsf{P}|$ and thus $|\mathcal{R}| = O(\arbor(G) m)$. 

Putting everything together, we have that the total number of boundary quadrangles added to $\candidateC$ is bounded by $O(\min \{ \arbor(G)m,  \sum_{(u,v)\in E} (d_{in}(u) + d_{out}(v)) \})$. 

Finally, Algorithm \ref{alg:fastAlg} spends $O(m+\sum_{(u,v)\in E} (d_{in}(u) + d_{out}(v)))$ time to handle both cases in Figure \ref{fig:pers_quad}.
The theorem then follows. 
\end{proof}

We now prove our main result, Theorem \ref{thm:main}. 
In particular, below we first show that Algorithm \ref{alg:fastAlg} takes $O(r m^2)$ time to compute the $1$-dimensional \emph{persistent} path homology, where 
\[r = \csize.\]
We then describe how to improve the time complexity to $O(r m^{\omega-1})$ to establish Theorem \ref{thm:main}.

Specifically, by using a union-find data structure to maintain a spanning forest for $G^{(s)}$, it takes $O(m\alpha (n))$ time, where $\alpha(\cdot)$ is the inverse Ackermann function, to check whether each edge $e_s$ is negative or positive (forming a cycle or not). 
As explained in the algorithm, if it is negative, then nothing needs to be done. If it is positive, then we need to compute a generating set as well as find new basis for the boundary group and compute persistence. 

By Theorem \ref{thm:candidateSize}, the total time complexity spent on all executions of line 5 of Algorithm \ref{alg:fastAlg} is $O( m+ \sum_{(u,v)\in E} (\din(u) + \dout(v)) )$. We now argue that this value is bounded by $O(r m)$. This clearly holds if $r = \sum_{(u,v)\in E} (\din(u) + \dout(v))$. If on the other hand $r = \arbor(G) m$, then $r = \Omega(m)$ and since $\sum_{(u,v)\in E} (\din(u) + \dout(v)) = O(mn)$, we have that 

\[ m + \sum_{(u,v)\in E} (\din(u) + \dout(v)) = O(r) + O(mn)  = O(rm). \] 

Next we bound the total time taken by all executions of line 6 of Algorithm \ref{alg:fastAlg} (i.e, on calling procedure {\sc FindPairs}($s$) for all $s \in [1,m]$). 
Note that at the $s$-th stage, we need to reduce $|\candidateC_s|$ number of columns in a matrix of the form $[B \mid \candidateC_s]$, each of which is of length $m$. Furthermore, since when reducing a specific column in a left-to-right manner, the number of non-zero columns to its left is bounded by the size of $rank(\mathsf{B_1}(G^{(s)})) = O(m)$, each column-reduction takes $O(m^2)$ time. 
As $|\candidateC| = \sum_{s} |\candidateC_s| = O(r)$ by Theorem \ref{thm:candidateSize}, procedure {\sc FindPairs}($s$) for all $s \in [1,m]$ takes $O(rm^2)$ time. 
It follows that Algorithm \ref{alg:fastAlg} takes time $O(rm + r m^2) = O(rm^2)$.

\vspace*{0.1in} \noindent \emph{Improving the time complexity.} 
We see that the dominating term for the time complexity of Algorithm \ref{alg:fastAlg} is the time spent on procedure {\sc FindPairs}($s$) for all $s \in [1,m]$. 
To this end, instead of reducing the columns for each $s = 1, \ldots, m$, we compute the earliest basis from the matrix $[ \candidateC_1 \mid \candidateC_s \mid \cdots \mid \candidateC_m]$ by the algorithm in~\cite{LSPEarly}. Here the earliest basis of a matrix $A$ with rank $r$ means the set of columns $B_{opt} = \{a_{i_1}, \cdots, a_{i_r}\}$ if the column indices $\{i_1,\cdots , i_r\}$ are the lexicographically smallest index set such
that the corresponding columns of $A$ have full rank. 
Since $|\cup_s \candidateC_s| = O(r)$ and the number of  independent columns is $O(m)$, this can be done in $O( (\frac{r}{m}) m^\omega) = O(r m^{\omega-1})$ time. 
This finishes the proof of Theorem \ref{thm:main}.

\vspace*{0.1in}\noindent{\underline{\emph{Remark: 1}}}
We note that neither term in $r = \csize$ always dominates. In particular, it is easy to find examples where one term is significantly smaller (asymptotically) than the other. 
For example, for any planar graph $G$, $\arbor(G) m = O(n)$. However, it is easy to have a planar graph where the second term $\sum_{(u,v)\in E} (\din(u) + \dout(v) ) = \Omega(n^2)$; see e.g, Figure \ref{fig:quadratic}. 

On the other hand, it is also easy to have a graph $G$ where $\sum_{(u,v)\in E} (\din(u) + \dout(v) ) = O(1)$ yet $\arbor(G) m = \Theta(n^3)$. Indeed, consider the bipartite graph in Figure \ref{fig:nobndQ}, where for each edge $(u,v)\in E$, $\din(u) + \dout(v) = 0$. 
However, this graph has $\arbor(G) = \Theta(n)$, $m = \Theta(n^2)$ and thus $\arbor(G)m = \Theta(n^3)$. 

\vspace*{0.1in}\noindent{\underline{\emph{Remark: 2: }}}
We note that the time complexity of the algorithm proposed by Chowdhury and M\'{e}moli in \cite{chowdhury2018persistent} to compute the $(d-1)$-dimensional persistence path homology takes $O(n^{3 + 3d})$ time. 
However, for the case $d=2$, a more refined analysis shows that in fact, their algorithm takes only $O((\sum_{(u, v)\in E} (d_{in}(u)+d_{out}(v))) mn^2)$ time. 

Compared with our algorithm, which takes time $O( r m^{\omega-1})$ with $r = \min\{ \arbor(G)m, \sum_{(u, v)\in E} ($ $d_{in}(u)+d_{out}(v))\}$ and $\omega < 2.373$, observe that our algorithm can be significantly faster (when $\arbor(G)m$ is much smaller than $\sum_{(u, v)\in E} (d_{in}(u)+d_{out}(v))$. 
For example, for planar graphs, our algorithm takes $O(n^{\omega})$ time, whereas the algorithm of \cite{chowdhury2018persistent} takes $O(n^5)$ time. 

\section{Applications}\label{sec:application}
 We first show  in Section \ref{subsec:minbasis} that our new algorithm can be extended to compute a minimal $1$-dimensional path homology basis. We then show in Section \ref{subsec:exp} some preliminary experimental results for our algorithms, including showing the efficiency of our algorithm compared to the previous best algorithm over several datasets. 

\subsection{Annotations and minimum \texorpdfstring{$1$}{Lg}-dimensional (path) homology basis}
\label{subsec:minbasis}

Our algorithm can also compute a ($1$-dimensional) minimal homology basis for a directed graph $G= (V,E)$.
In particular, let $g = rank(\HH_1)$, and assume that $G = (V, E)$ is equipped with positive edge weights $w: E \to \mathbb{R}^+$. 
Given any 1-cycle $\gamma = \sum c_i e_i$, for each $i\in [1, m]$ set $\hat{c}_i = 1$ if $c_i \neq 0$; and $\hat{c}_i = 0$ otherwise. Then the \emph{length of $\gamma$}, $\mu(\gamma)$, equals $\mu(\gamma) = \sum_i \hat{c}_i w(e_i)$. 

Now abusing the notations slightly, we say that a set of $g$ 1-cycles $\{ \gamma_1, \cdots, \gamma_g\}$  forms a $1$-dimensional homology basis if the homology classes they represent, $\{ [\gamma_1], \cdots, [\gamma_g]\}$, forms a basis for $\HH_1(G)$. 
The \emph{total length} of this homology basis equals the total lengths of all cycles involved, i.e. $\sum_{i=1}^g \mu(\gamma_i)$.

\begin{definition}
Given a weighted directed graph $G=(V, E)$, let $g$ be the rank of $1$-dimensional homology group $\HH_1(G)$. 
Let $\mu: \mathsf{Z}_1 \to \mathbb{R}^+\cup \{0\}$ be the length of each cycle $C \in \mathsf{Z}_1$. 
A homology basis $\gamma_1, \cdots, \gamma_g$ is called minimal if $\sum_{i=1}^g \mu(\gamma_i)$ is minimal among all bases of $\mathsf{H}_1$.
\end{definition}

\paragraph{Annotation}To this end, we first compute the so-called annotations of cycles to represent its homology class~\cite{Busaryev2012Annotating}. Then we use annotation to compute a minimal homology basis. 
In particular, for every 1-chain $\gamma\in \Omega_1$, we assign $\gamma$ a $g$-bit vector $\alpha(\gamma)$, called the \emph{annotation of $\gamma$}, such that any two cycles $C$ and $C'$ are homologous if and only if their annotations $\alpha(C)$, $\alpha(C')$ are the same. As before, set $r = \csize$. 

We compute the annotation according to \cite{Busaryev2012Annotating}. First we compute the annotation for every edge. We construct a cycle basis $Z$ in which any cycle can be expressed in simple and efficient terms. Note that as in the previous section, Algorithm~\ref{alg:fastAlg} can not only compute a basis for $\HH_1(G)$, it can in fact partition edges into negative edge set and positive edge set, in which all negative edges $E(T)$ form a spanning tree $T$, while every positive edge $e\in \hat{E}=E\setminus E(T)$ can create a new cycle together with $T$, denoted as $\gamma(T, e)$. Note that $\gamma(T, e)$ can be written as a $m$-bit vector. According to Proposition~\ref{prop:Z1}, all such cycle form a cycle basis $Z$. For every tree edge $e\in E(T)$, let $\gamma(T, e)$ be a $m$-bit vector where every element is 0.
It has been proved in~\cite{Busaryev2012Annotating} that for any cycle $C=\sum_{i}c_{i}e_{i}$ where $c_i$ is the coefficient of $e_i$ in $C$, it holds that $C=\sum_{i}c_{i}\gamma(T, e_i)$.
Note that Algorithm~\ref{alg:fastAlg} indeed computes a boundary basis $B$ for $\mathsf{B}_1(G)$ and a homology basis $H$ for $\HH_1(G)$, such that $B \cup H$ forms another cycle basis $\widehat Z$ for the cycle-group $\mathsf{Z}_1(G)$
where $rank(\hat Z)=rank(Z)=m-n+1$. Thus for every cycle $\gamma(T, e)$, 
we can solve a linear system $\hat Zx=\gamma(T, e)$. As a result, $x$ will be a $(m-n+1)\times 1$ vector. We assign last $g = rank(\HH_1)$-bits as the annotation $\alpha(e)$ of $e$. 
We can solve for the annotation for all edges in $O(m^{\omega})$ time where $m$ is the number of edges. Combined with time needed to compute the homology basis $H$, we conclude that computing the annotations of all edges costs $O(min\{\arbor(G)m, \sum_{(u, v)\in E} (d_{in}(u)+d_{out}(v))\} m^{\omega-1}+m^{\omega})$ time.

After computing the annotation for every edge, i.e. $\alpha(e)$ for every edge $e$, for every 1-cycle $C=\sum_{i}c_ie_i$ where $c_i$ is the coefficients for the edge $e_i$, its annotation can be computed as $\alpha(C)=\sum_{i}c_i\alpha(e_i)$. 

\paragraph{Minimal homology basis}We now compute a minimal homology basis. Using the results in~\cite{liebchen2005greedy}, we observe that there is a collection of cycles, called Horton family, that includes a minimal homology basis.  It is known from~\cite{horton1987polynomial} that the cardinality of the Horton family is $O(nm)$. 
Using same steps as in~\cite{liebchen2005greedy}, replacing vectors of cycles with their annotations, we can compute a $1$-dimensional minimal path homology basis.

According to \cite{Busaryev2012Annotating}, we have the following: (1) given annotations for edges, the time to compute the annotations of Horton cycles, is $O(n^2g+mn\log n)$; (2) given $O(nm)$ Horton cycles, extracting a minimal homology basis needs $O(nmg^{\omega-1})$ time. Thus in total the time to compute minimal homology basis is $O(r m^{\omega-1}+m^{\omega}+nmg^{\omega-1})$, where $r = \min\{\arbor(G)m, m+\sum_{(u, v)\in E} (d_{in}(u)+d_{out}(v))\}$.

\subsection{Preliminary experimental results}
\label{subsec:exp}
We implemented our algorithms to compute both the $1$-dimensional path homology and the $1$-dimensional persistent path homology. 
The implementaion is in Python 3 using the coefficient field $\mathbb{F}=\mathbb{R}$. 
\subsubsection{Comparing our algorithm with previous algorithm}
We test on some datasets on both our software
\footnote{Code is available in
\hyperlink{https://github.com/tianqicoding/1dPPH}{https://github.com/tianqicoding/1dPPH}} as well as the software in~\cite{chowdhury2018persistent}
\footnote{Code can be found in \hyperlink{https://github.com/samirchowdhury/pypph}{https://github.com/samirchowdhury/pph-matlab}}, including U.S. economic sector data, cycle network, C.elegans, Citeseer and Cora.
All experiments are worked on a PC Macbook Pro with Processor 2.9GHz Intel Core i5, Memory 8GB 1867 MHz DDR3.
We first introduce the datasets, and then analyse the output.

\paragraph{U.S. economic sector data} This dataset is released by the U.S. Bureau of Economic Analysis(\hyperlink{https://www.bea.gov/}{https://www.bea.gov/}) of the ``make'' and ``use'' tables of the production of commodities by industries. We obtained ``use'' table data for 15 industries across the year range 1997-2015; we have 19 tables, each showing the yearly asymmetric flow of commodities across industries. We used same preprocessing used by~\cite{chowdhury2018persistent}, after which we get 19 complete directed graphs.

\paragraph{Asian migration and remittance} This dataset is the Asian net migration and remittance networks in 2015  including 50 countries and regions, obtained from UN Global Migration Data\-base and on bilateral remittances from the World Bank database as reported respectively in~\cite{migration} and \cite{remittance}. \cite{ignacio2019tracing} analyzed the data via directed clique complex. Here we use the same preprocessing as~\cite{ignacio2019tracing} and work on persistent path homology.

\paragraph{Directed cycle graph with 1000 vertices} This dataset is a directed cycle graph with 1000 vertices. Each edge follows in the same direction; for every vertex, both the indegree and outdegree are exactly 1.

\paragraph{C.elegans} This dataset is a chemical synapse network of C.elegans~\cite{varshney2011structural}. It is a directed graph with 279 vertices and 2194 edges. Each vertex represents a neuron, and every edge reflects the synaptic contact between corresponding neurons; the source is the sender, and the sink is the receiver.
Every edge has an weight, meaning the number of chemical synapse the receiver received from the sender.

\paragraph{Citation networks--Citeseer and Cora~\cite{sen2008collective}} These two datasets are both citation benchmarks. Both are directed networks, indicating the citation between papers. There are 2708 vertices and 5429 edges in Cora, while 3279 vertices and 4608 edges in Citeseer.

\begin{table}[h]
  \begin{tabular}{ |l | c | r |}
    \hline
    Dataset Name  & Our algorithm(s) & Algorithm in~\cite{chowdhury2018persistent}(s)\\ \hline
    U.S.economic sector data(19 graphs) &0.7381(average) &0.4795(average)\\ \hline
    Migration&2.1277&3.8396\\
    \hline
    Remittance&1.7780&2.0914\\\hline
    Directed cycle graph&0.0047&0.8197\\ \hline
    C.elegans & 21.2429 &71.1714\\ \hline
    Cora  &6.0170&27.3146 \\ \hline
    Citeseer&2.6352&13.5158\\\hline
    \end{tabular}
    \caption{}
   \label{tab:comp}
   \vspace{-20pt}
  \end{table}
\paragraph{Analysis}Table~\ref{tab:comp} summerizes the runtimes for our approach as well as the algorithm in~\cite{chowdhury2018persistent}.
It implies that the baseline approach has a better performance than ours for small but dense graphs. This is because in dense graphs(e.g. complete graphs), the candidate set we compute in our algorithm is comparable with the size of 2-paths; the term $r\approx\sum_{(u, v)\in E}(d_{in}(u)+d_{out}(v))$ in Theorem~\ref{thm:main}. 

However, our algorithm significantly outperforms the baseline approach for graphs whose edges and vertices are comparable. The reason is that in this case, the term $r$ is determined by $\arbor (G)m$. In this type of graphs, $\arbor (G)m$ is smaller than the number of 2-paths.
Furthermore, our algorithm performs on large dataset well which makes (persistent) path homology more applicable in real datasets.

\subsection{Asian Migration and Remittance}
The datasets show the migration and remittance between countries and regions in asia. We list all countries and regions in Table~\ref{tab:contries}.
\begin{table}[h]
  \begin{tabular}{llllll}
  \hline
Order &  Country &  Abbrev.&Order &  Country &  Abbrev.\\\hline
1& Afghanistan & AF &26&Lebanon& LB\\\hline
2 & Armenia& AM&27&Malaysia& MY\\\hline
3 & Azerbaijan& AZ&28& Maldives& MV\\\hline
4 & Bahrain& BA&29&Mongolia &MN\\\hline
5 & Bangladesh& BD&30&Myanmar &MM\\\hline
6&  Bhutan& BT&31&Nepal &NP\\\hline
7&  Brunei Darussalam& BN &32&Oman &OM \\\hline
8&  Cambodia& KH&33&Pakistan &PK\\\hline
9&  China(mainland)& CN&34&Philippines &PH\\\hline
10&  Hong Kong& HK&35&Qatar &QA\\\hline
11&  Macau& MO&36&Republic of Korea &KR\\\hline
12&  Cyprus& CY&37&Saudi Arabia &SA\\\hline
13&  Dem. People’s Rep. of Korea& KP&38&Singapore &SG\\\hline
14&  Georgia& GE&39&Sri Lanka &LK\\\hline
15&  India& IN&40&State of Palestine &PS\\\hline
16&  Indonesia& ID&41&Syria &SY\\\hline
17&  Iran& IR&42&Tajikistan &TJ\\\hline
18&  Iraq& IQ&43&Thailand& TH\\\hline
19&  Israel& IL&44&Timor-Leste &TI\\\hline
20&  Japan& JP&45&Turkey &TR \\\hline
21&  Jordan& JO&46&Turkmenistan &TM\\\hline
22&  Kazakhstan&  KZ&47&United Arab Emirates& AE\\\hline
23&  Kuwait& KW&48&Uzbekistan &UZ\\\hline
24&  Kyrgyzstan& KG&49&Vietnam &VN\\\hline
25&  Laos& LA&50&Yemen &YE\\\hline
 \end{tabular}
    \caption{}
   \label{tab:contries}
   \vspace{-20pt}
  \end{table}
                      
The persistent path homologies of migration and remittance networks are motivated by flow structure. We are concerned with flows containing single sink and single source. Different with~\cite{ignacio2019tracing}, we regard cycles not only with boundary triangle structure but also with boundary quadrangle structure as trivial: two flows with one or two edges from the source to the sink are equivalent. Under this motivation, we capture those non-trivial cycles, as well as compute the persistence.

We first do some preprocess on the dataset as in~\cite{ignacio2019tracing}. For remittance data, let $r_{ab}$ be the remittance from $a$ to $b$. We create an edge $(a, b)$ if $r((a, b))=r_{ab}-r_{ba}>0$; there are no bi-gons in the graph. We set the edge weight $w(e)$ to  $w(e)=max_{e}(r(e))+1-r(e)$, inducing a filtration of path homologies.  The edge weights transform largest remittance to smallest. As a consquence, cycles with large weights are born early, and cycles that are born early but killed off by edges with smaller weights will have large persistence. Same preprocessing is employed on migration dataset.

We process persistent path homology on migration network. 
There are 44 generating cycles while there are 61 in~\cite{ignacio2019tracing}, meaning that there are 17 generating homology class non-trivial in~\cite{ignacio2019tracing} become trivial now when they are born because of the boundary quadrangles: there are two 2-path flows from the source to the sink. 
Same as ~\cite{ignacio2019tracing}, all cycles are killed of at the end, indicating small migration flow across countries and regions within cycles. 
We listing all generating cycles as Figure~\ref{fig:migration1} and Figure~\ref{fig:migration2}. Here all red nodes denote sources, greens denote sinks and blues denote other vertices. Figure~\ref{fig:migration1} and Figure~\ref{fig:migration2} reflect that neighbor countries tend to attract immigrants from the same countries, e.g. there are big migrations from India to both Kuwait and United Arab Emirates. On the other hand, people in neighbor countries, e.g. there are amount of people in both China(mainland) and Philippines moving to Hong Kong and Japan. Besides, we capture a directed cycle Kazakhstan$\to$Kyrgyztan$\to$ Tajikistan $\to$Kazakhstan, indicating a directed flow.

The generating cycles for remittance network are listed in Figure~\ref{fig:migration1} and Figure~\ref{fig:migration2}. Most generating cycles are consistent with cycles in migration generating cycles: the generating cycle for the remittance network is a generating cycle in migration network with all arrows reversed. For example, there is a generating cycle in Figure~\ref{fig:migration1}: India$\to$United Arab Emirates$\gets$Indonesia$\to$Saudi Arabia$\gets$India. It corresponds to a generating cycle in Figure~\ref{fig:remittance1}: India$\gets$United Arab Emirates$\to$
Indonesia$\gets$Saudi Arabia$\to$India. The guess behind that could be after migration, people tend to send money back to home.

\begin{figure}[H]
    \centering
    \includegraphics[width=14cm]{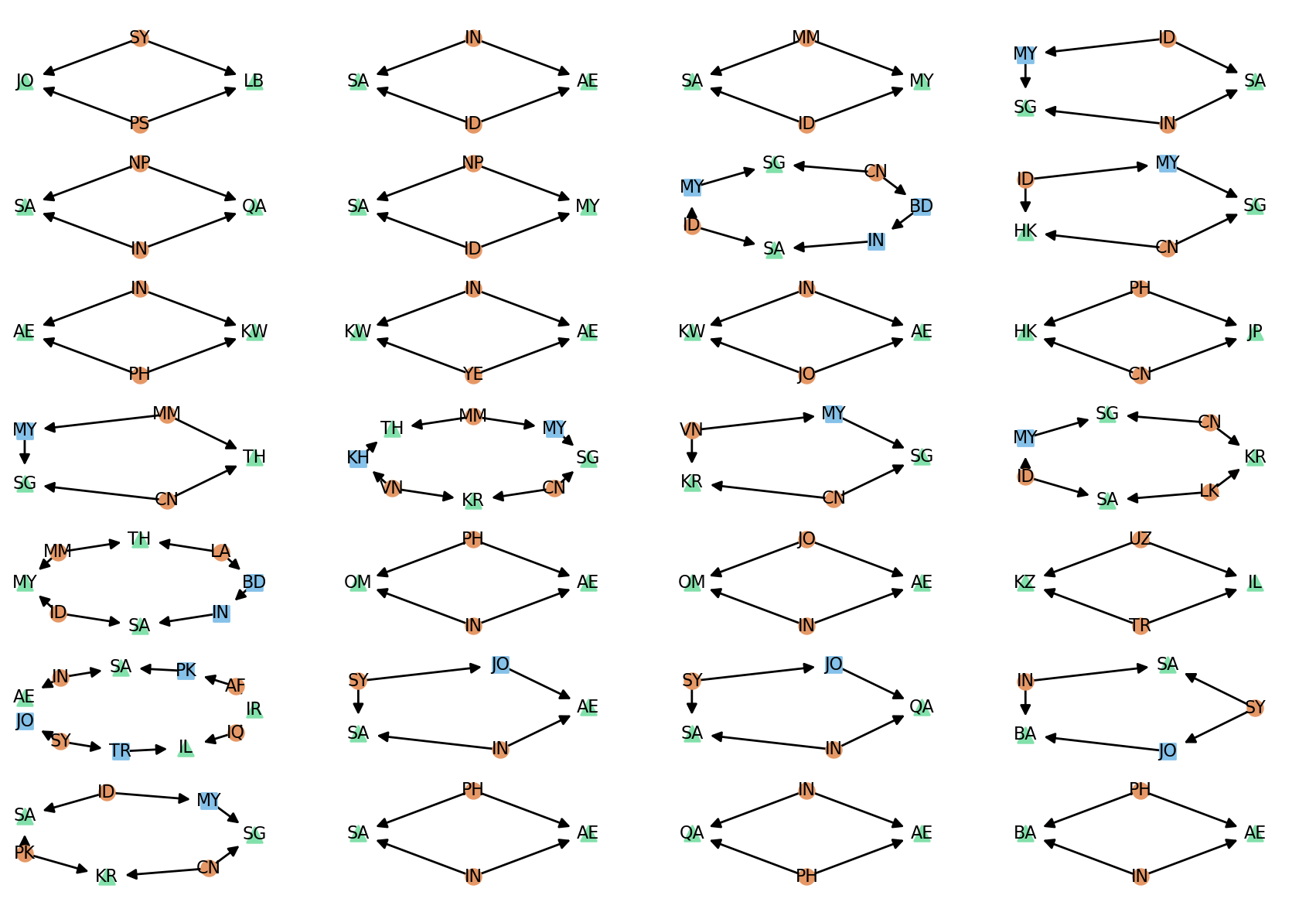}
    \caption{Generating cycles for persistent path homology on migration network}
    \label{fig:migration1}
    \centering
    \includegraphics[width=14cm]{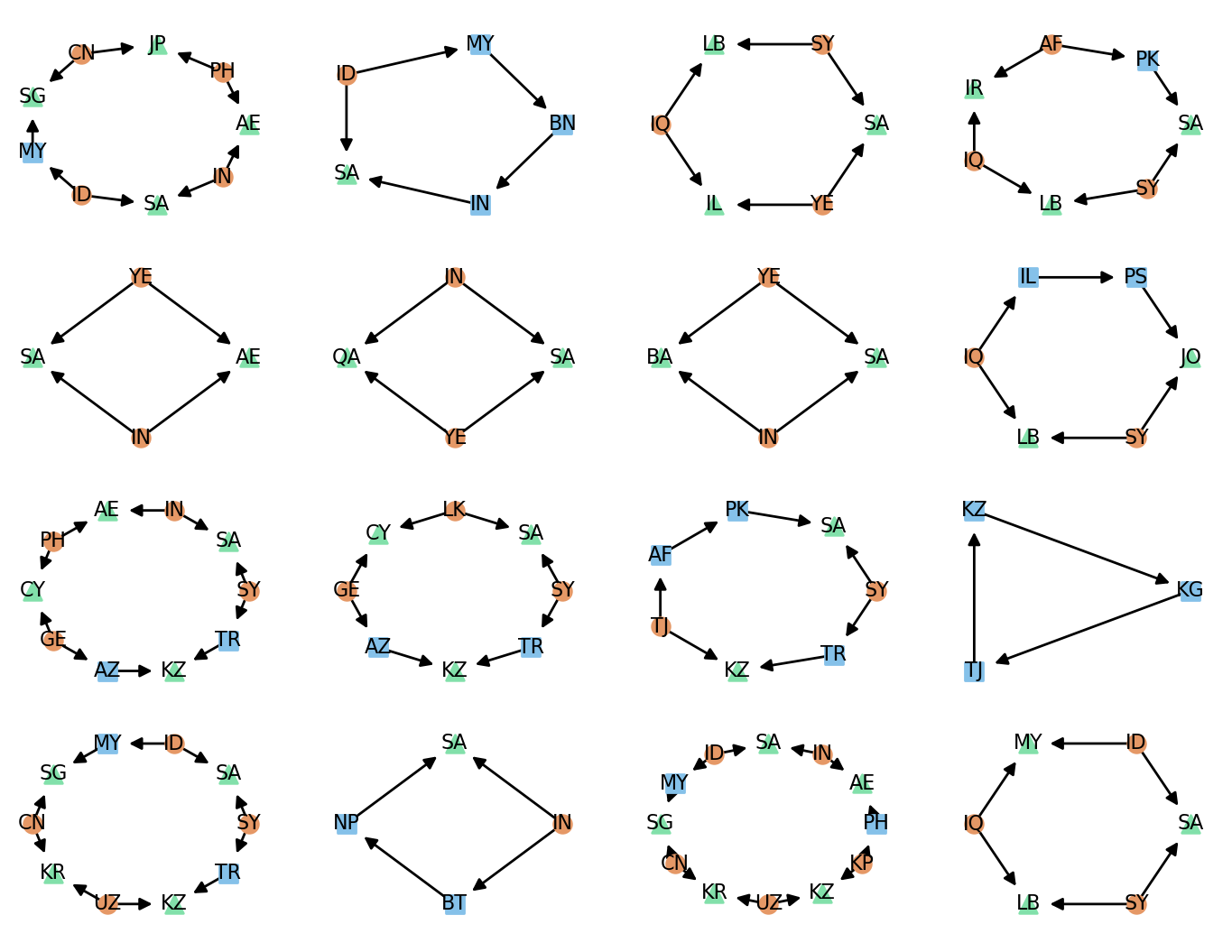}
    \caption{Generating cycles for persistent path homology on migration network(cont.)}
    \label{fig:migration2}
\end{figure}
\begin{figure}[H]
    \centering
    \includegraphics[width=14cm]{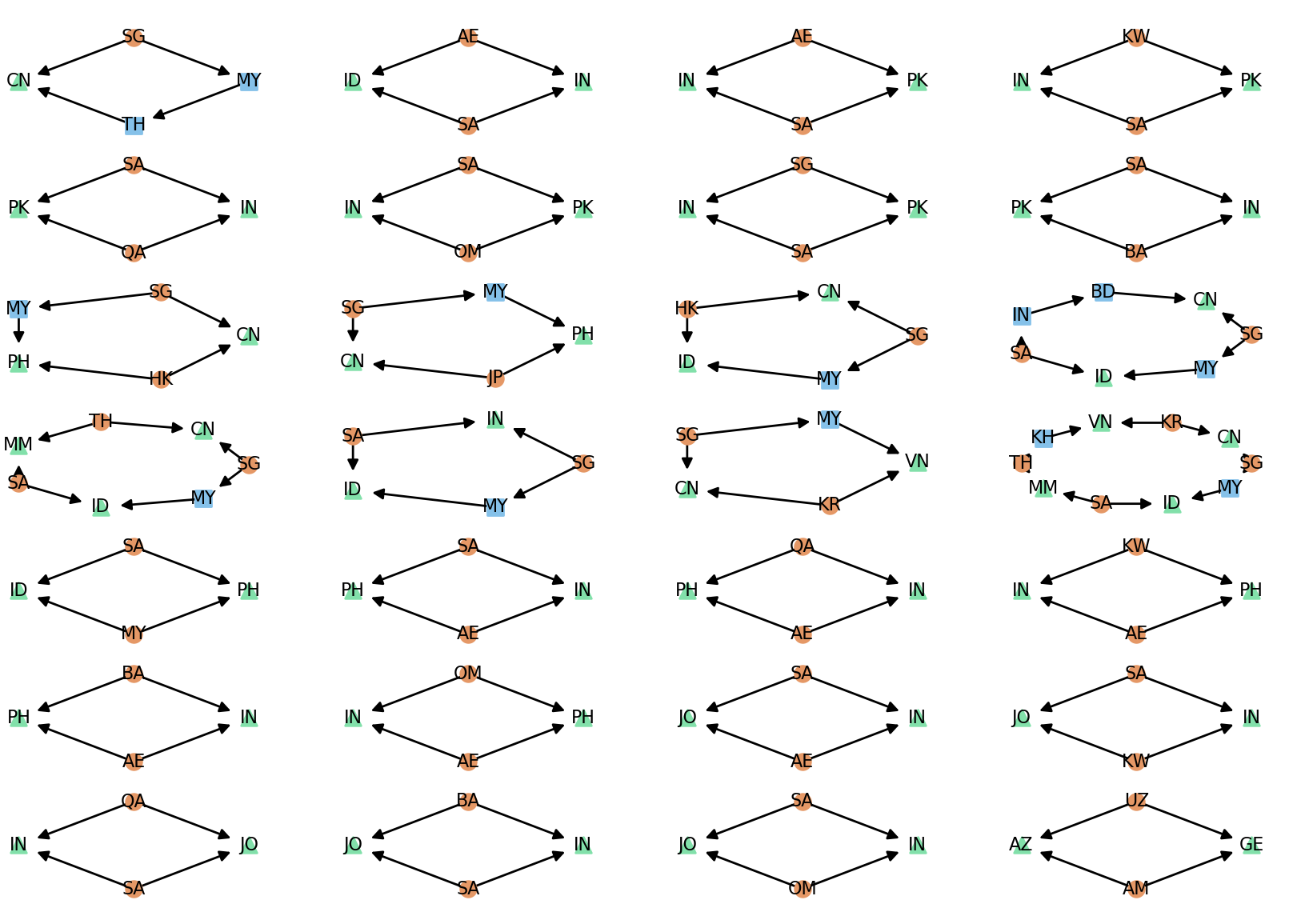}
    \caption{Generating cycles for persistent path homology on remittance network}
    \label{fig:remittance1}
    \centering
    \includegraphics[width=14cm]{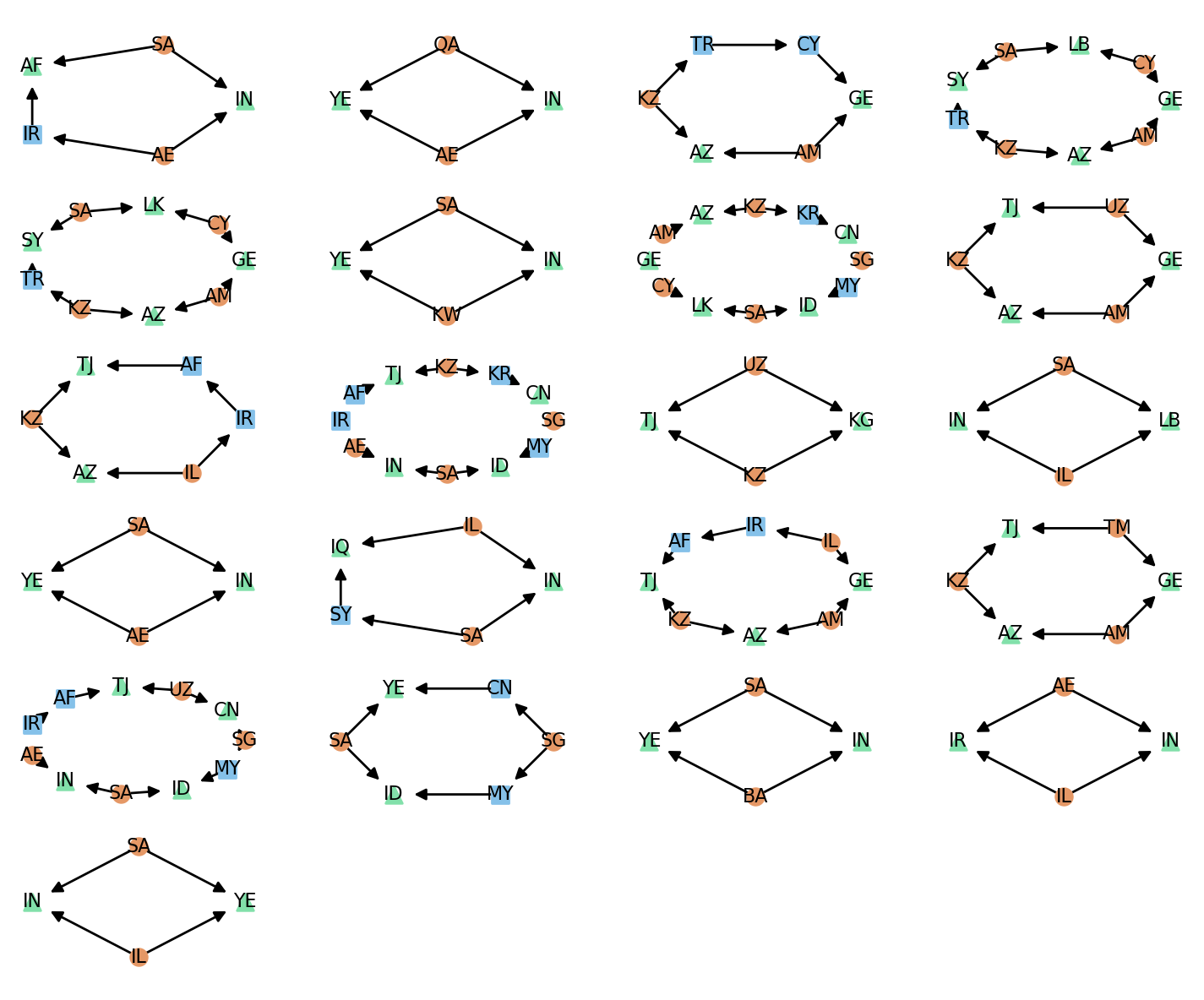}
    \caption{Generating cycles for persistent path homology on remittance network(cont.)}
    \label{fig:remittance2}
\end{figure}

\subsection{C.elegans}
\paragraph{Path homology.} 

We first compute the $1$-dimensional path homology of this network. 
The rank of $1$-dimensional homology is 17, whereas it is reported in~\cite{reimann2017cliques} that the rank of $1$-dimensional directed cliques is 183. This means that boundary quadrangles have a great effect on the synaptic contact structure; there are a number of cycles that can be written as linear combination of boundary triangles and quadrangles, but cannot be represented only by boundary triangles. 

We also compare with the $1$-dimensional path homology on a set of 1000 directed Erdo\" s-Re\'nyi random graphs with 279 vertices and the connection probability 0.028(thus with 2194 expected connections). 
The average first betti number of these ER random graphs is 114.91, while the first betti number in C.elegance is 17. It supports the claim from~\cite{varshney2011structural} that the C.elegance graph is significantly different from Erdo\" s-Re\'nyi random graph with the same number of vertices and similar number of edges.

\paragraph{Minimal homology basis.}
We also computed the minimum homology basis (with respect to lengths of cycles) for the C.elegans network.
Interestingly, all the 17 generating cycles in the computed minimal homology basis are (non-boundary) \emph{quadrangles}. Furthermore, there are no triangles or quadrangles which are directed cycles.
However, when we tested on 1000 Erdo\" s-Re\'nyi random graphs, we found that each of them has directed triangles or quadrangles in its respective minimal homology basis. 
Our results provide some information on four neuron subnetworks for the chemical synapse netowrk for C.elegans (the analysis of four neuron subnetworks is of interest and was previously carried out for the gap-junction network of C.elegans \cite{varshney2011structural}).

\section{Concluding remarks}
A natural question is whether it is possible to have a more efficient algorithm for computing (persistent) path homology of higher dimensions improving the work of \cite{chowdhury2018persistent}. 
Another question is whether we can compute a minimal path homology basis faster improving our current time bound $O(m^\omega n)$. 

\paragraph{Acknowledgement.} The authors thank annonymous reviewers for helpful comments on this paper. This work is in part supported by National Science Foundation under grants CCF-1740761, DMS-1547357, and RI-1815697. 

\bibliographystyle{splncs04}
\bibliography{references}

\begin{thebibliography}{10}
\providecommand{\url}[1]{\texttt{#1}}
\providecommand{\urlprefix}{URL }
\providecommand{\doi}[1]{https://doi.org/#1}

\bibitem{Busaryev2012Annotating}
Busaryev, O., Cabello, S., Chen, C., Dey, T.K., Wang, Y.: Annotating simplices
  with a homology basis and its applications. In: Algorithm Theory -- SWAT
  2012. pp. 189--200. Springer Berlin Heidelberg, Berlin, Heidelberg (2012)

\bibitem{chen2009efficient}
Chen, W., Wang, Y., Yang, S.: Efficient influence maximization in social
  networks. In: Proceedings of the 15th ACM SIGKDD international conference on
  Knowledge discovery and data mining. pp. 199--208. ACM (2009)

\bibitem{cheung2013fast}
Cheung, H.Y., Kwok, T.C., Lau, L.C.: Fast matrix rank algorithms and
  applications. Journal of the ACM (JACM)  \textbf{60}(5), ~31 (2013)

\bibitem{chiba1985arboricity}
Chiba, N., Nishizeki, T.: Arboricity and subgraph listing algorithms. SIAM
  Journal on Computing  \textbf{14}(1),  210--223 (1985)

\bibitem{chowdhury2018functorial}
Chowdhury, S., M{\'e}moli, F.: A functorial dowker theorem and persistent
  homology of asymmetric networks. Journal of Applied and Computational
  Topology  \textbf{2}(1-2),  115--175 (2018)

\bibitem{chowdhury2018persistent}
Chowdhury, S., M{\'e}moli, F.: Persistent path homology of directed networks.
  In: Proceedings of the Twenty-Ninth Annual ACM-SIAM Symposium on Discrete
  Algorithms. pp. 1152--1169. SIAM (2018)

\bibitem{cohen2006vines}
Cohen-Steiner, D., Edelsbrunner, H., Morozov, D.: Vines and vineyards by
  updating persistence in linear time. In: Proceedings of the twenty-second
  annual symposium on Computational geometry. pp. 119--126. ACM (2006)

\bibitem{dey2018efficient}
Dey, T.K., Li, T., Wang, Y.: Efficient algorithms for computing a minimal
  homology basis. In: Latin American Symposium on Theoretical Informatics. pp.
  376--398 (2018)

\bibitem{dotko2016topological}
Dlotko, P., Hess, K., Levi, R., Nolte, M., Reimann, M., Scolamiero, M., Turner,
  K., Muller, E., Markram, H.: Topological analysis of the connectome of
  digital reconstructions of neural microcircuits. arXiv preprint
  arXiv:1601.01580  (2016)

\bibitem{erickson2005greedy}
Erickson, J., Whittlesey, K.: Greedy optimal homotopy and homology generators.
  In: Proceedings of the sixteenth annual ACM-SIAM symposium on Discrete
  algorithms. pp. 1038--1046. Society for Industrial and Applied Mathematics
  (2005)

\bibitem{grigor2012homologies}
Grigor'yan, A., Lin, Y., Muranov, Y., Yau, S.T.: Homologies of path complexes
  and digraphs. arXiv preprint arXiv:1207.2834  (2012)

\bibitem{grigor2014homotopy}
Grigor'yan, A., Lin, Y., Muranov, Y., Yau, S.T.: Homotopy theory for digraphs.
  arXiv preprint arXiv:1407.0234  (2014)

\bibitem{grigor2015cohomology}
Grigor’yan, A., Lin, Y., Muranov, Y., Yau, S.T.: Cohomology of digraphs and
  (undirected) graphs. Asian J. Math  \textbf{19}(5),  887--931 (2015)

\bibitem{harary1971graph}
Harary, F.: Graph Theory. Addison Wesley series in mathematics, Addison-Wesley
  (1971), \url{https://books.google.com/books?id=q8OWtwEACAAJ}

\bibitem{horton1987polynomial}
Horton, J.D.: A polynomial-time algorithm to find the shortest cycle basis of a
  graph. SIAM Journal on Computing  \textbf{16}(2),  358--366 (1987)

\bibitem{ignacio2019tracing}
Ignacio, P.S.P., Darcy, I.K.: Tracing patterns and shapes in remittance and
  migration networks via persistent homology. EPJ Data Science  \textbf{8}(1),
  ~1 (2019)

\bibitem{LSPEarly}
Jeannerod, C.: {LSP} matrix decomposition revisited  (2006),
  \url{http://www.ens-lyon.fr/LIP/Pub/Rapports/RR/RR2006/RR2006-28.pdf.}

\bibitem{liebchen2005greedy}
Liebchen, C., Rizzi, R.: A greedy approach to compute a minimum cycle basis of
  a directed graph. Information Processing Letters  \textbf{94}(3),  107--112
  (2005)

\bibitem{masulli2016topology}
Masulli, P., Villa, A.E.: The topology of the directed clique complex as a
  network invariant. SpringerPlus  \textbf{5}(1), ~388 (2016)

\bibitem{milo2002network}
Milo, R., Shen-Orr, S., Itzkovitz, S., Kashtan, N., Chklovskii, D., Alon, U.:
  Network motifs: simple building blocks of complex networks. Science
  \textbf{298}(5594),  824--827 (2002)

\bibitem{reimann2017cliques}
Reimann, M.W., Nolte, M., Scolamiero, M., Turner, K., Perin, R., Chindemi, G.,
  D{\l}otko, P., Levi, R., Hess, K., Markram, H.: Cliques of neurons bound into
  cavities provide a missing link between structure and function. Frontiers in
  computational neuroscience  \textbf{11}, ~48 (2017)

\bibitem{sen2008collective}
Sen, P., Namata, G., Bilgic, M., Getoor, L., Galligher, B., Eliassi-Rad, T.:
  Collective classification in network data. AI magazine  \textbf{29}(3),
  93--93 (2008)

\bibitem{remittance}
The International Development~Association, t.w.b.g.: Bilateral remittance
  estimates for 2015 using migrant stocks, host country incomes, and origin
  country incomes (millions of us\$) (october 2016 version)  (2016),
  \url{https://www.worldbank.org/en/topic/migrationremittancesdiasporaissues/brief/migration-remittances-data}

\bibitem{migration}
United~Nations, D.o.E., (2015), S.A.: United nations department of economic \&
  social affairs pd trends in international migrant stock : migrants by
  destination and origin (united nations database, pop/db/mig/stock/rev.2015)
  (2015),
  \url{https://www.un.org/en/development/desa/population/migration/data/estimates2/estimates15.asp}

\bibitem{varshney2011structural}
Varshney, L.R., Chen, B.L., Paniagua, E., Hall, D.H., Chklovskii, D.B.:
  Structural properties of the caenorhabditis elegans neuronal network. PLoS
  computational biology  \textbf{7}(2),  e1001066 (2011)

\end{thebibliography}
\appendix

\section{Proof of Theorem~\ref{thm:B1}}\label{apex:bnd}
\begin{proof}
First, by discussions in Section \ref{subsec:bgexamples}, it is easy to see that $\mathsf{Q} \subseteq \mathsf{B}_1 \subseteq \mathsf{Z}_1$. 
We now show that $\mathsf{B}_1 \subseteq \mathsf{Q}$. 
To this end, consider any 2-path (2-chain) $p = \sum a_{uvw} \cdot e_{uvw} \in \Omega_2$, where $a_{uvw} \in \mathbb{F}$ is the coefficient of the elementary 2-path $e_{uvw}$. 
Its boundary is $\partial p = \sum a_{uvw} (e_{uv}- e_{uw} + e_{vw})$ which we will argue to be in the space $\mathsf{Q}$. 
Now set $C = \partial p$.

Specifically, consider an elementary 2-path $e_{uvw}$ in $C$ such that $a_{uvw} \neq 0$. We have that $\partial e_{uvw} = e_{uv} - e_{uw} + e_{vw}$. Note that at this point, we do not yet know whether the allowed 2-path $e_{uvw}$ is also an $\partial$-invariant path yet (i.e, it is not clear whether $\partial e_{uvw} \in \Omega_1$). 
Nevertheless, as $e_{uvw}$ is an allowable $2$-path, edges $(u,v), (v, w) \in E$ (that is, both 1-paths $e_{uv}$ and $e_{vw}$ are allowed). 
As for the 1-path $e_{uw}$, we have three cases:

(i) $u = w$, we have a bi-gon.

(ii) $u\not= w$ and $(u, w)$ is an edge in $G$. In this case, $e_{uv} - e_{uw} + e_{vw}$ is allowed. As it is also a cycle, it follows that $e_{uv} - e_{uw} + e_{vw}\in \Omega_1$, and in fact, it forms a boundary triangle. 

(iii) $u\not =w$ and $(u,w) \notin E$: In this case, $e_{uw}$ cannot exist in $\partial C$ because $C \in \Omega_2$, meaning that $\partial C$ must be an allowed 1-path. 
In other words, $e_{uw}$ has to be cancelled by the boundary of some other elementary 2-path $e_{uv'w}$ with $a_{uv'w} \neq 0$ in $C$. Since $(u,w)\notin E$, the $2$-path $e_{uv'w}$ is not a bi-gon nor a boundary triangle.
That is, there is a 2-chain $e_{uv'w}$ with $a_{uv'w} \neq 0$, whose boundary equals $e_{uv'} - e_{uw} + e_{v'w}$, and $(u,v'), (v'w)\in E$. 
Hence $\partial (e_{uvw} - e_{uv'w}) = e_{uv} + e_{vw} - e_{uv'} - e_{v'w}$ forms a boundary quandrangle.

We now update $C' = C - a_{uvw} e_{uvw}$ for cases (i) and (ii), or update $C' = C - a_{uvw} (e_{uvw} - e_{uv'w})$ for case (iii). 
It is easy to see that $C'$ contains fewer terms with non-zero coefficients than $C$. We repeat the above argument to $C'$ till it becomes the zero. As a result, $C = \partial p$ is decomposed to be a combination of bi-gons, boundary triangles and boundary quadrangles. 
It follows that for any 2-chain $p\in \Omega_2$, we have its boundary $\partial p \in \mathsf{Q}$, which proves $\mathsf{B}_1\subset \mathsf{Q}$.
Putting everything together, we have that $\mathsf{B}_1=\mathsf{Q}$.

\end{proof}

\end{document}